\documentclass[11pt]{article}

\usepackage{fullpage}
\usepackage{framed}
\usepackage{mdframed}
\usepackage{enumerate}
\usepackage[inline]{enumitem}
\usepackage[T1]{fontenc}
\usepackage{moresize}
\usepackage{bm}
\usepackage{dsfont}
\usepackage{amsmath}
\usepackage{amssymb}
\usepackage{amsthm}
\usepackage{amsfonts}
\usepackage{stmaryrd}
\usepackage{array}
\usepackage{mathrsfs}
\usepackage{mathtools} 
\usepackage{extarrows}
\usepackage{stackrel}
\usepackage[nodisplayskipstretch]{setspace}
\usepackage{color}
\usepackage[usenames,dvipsnames]{xcolor}
\usepackage{cancel}
\usepackage{soul}
\usepackage{xfrac}
\usepackage{siunitx}
\usepackage{graphicx}
\usepackage{float}
\usepackage{rotating}
\usepackage{subcaption}
\usepackage{overpic}
\usepackage[all]{xy}
\usepackage{bm}
\DeclareGraphicsRule{*}{mps}{*}{}
\usepackage{tikz}
\usetikzlibrary{arrows,matrix,positioning,calc,automata,patterns}
\usepackage{booktabs}
\usepackage{dcolumn}
\usepackage{multirow}
\usepackage{diagbox}
\usepackage{verbatim}
\usepackage{listings}
\usepackage{algorithm}
\usepackage{algpseudocode}
\usepackage{fancyvrb}
\usepackage{hyperref}
\usepackage[round]{natbib}
\hypersetup{
    unicode=false,          
    pdftoolbar=true,        
    pdfmenubar=true,        
    pdffitwindow=false,     
    pdfstartview={FitH},    
    pdftitle={My title},    
    pdfauthor={Author},     
    pdfsubject={Subject},   
    pdfcreator={Creator},   
    pdfproducer={Producer}, 
    pdfkeywords={key1, key2}, 
    pdfnewwindow=true,      
    colorlinks=true,        
    linkcolor=blue,         
    citecolor=blue,         
    filecolor=blue,         
    urlcolor=cyan           
}

\def\tr{\mathop{\text{tr}}\kern.2ex}

\newcommand{\argmax}{\mathop{\mathrm{argmax}}}

\newcolumntype{L}[1]{>{\raggedright\let\newline\\\arraybackslash\hspace{0pt}}m{#1}}
\newcolumntype{C}[1]{>{  \centering\let\newline\\\arraybackslash\hspace{0pt}}m{#1}}
\newcolumntype{R}[1]{>{ \raggedleft\let\newline\\\arraybackslash\hspace{0pt}}m{#1}}
\newcolumntype{d}[1]{D{.}{.}{#1}}
\newcolumntype{H}{>{\setbox0=\hbox\bgroup}c<{\egroup}@{}}
\newcolumntype{Z}{>{\setbox0=\hbox\bgroup}c<{\egroup}@{\hspace*{-\tabcolsep}}}
\numberwithin{equation}{section}

\newtheorem{lemma}{Lemma}[section]
\newtheorem{proposition}{Proposition}[section]
\newtheorem{assumption}{Assumption}[section]

\newtheorem{definition}{Definition}[section]

\providecommand{\customgenericname}{}
\newcommand{\newcustomtheorem}[2]{%
  \newenvironment{#1}[1]
  {%
   \renewcommand\customgenericname{#2}%
   \renewcommand\theinnercustomgeneric{##1}%
   \innercustomgeneric
  }
  {\endinnercustomgeneric}
}
\newcustomtheorem{customtheorem}{theorem}
\newcustomtheorem{customassumption}{Assumption}
\newcustomtheorem{customlemma}{lemma}
\newcustomtheorem{customexample}{example}
\theoremstyle{definition}
\newtheorem{example}{Example}[section]


\allowdisplaybreaks


\begin{document}

\setlength{\abovedisplayskip}{5pt}
\setlength{\belowdisplayskip}{5pt}
\setlength{\abovedisplayshortskip}{5pt}
\setlength{\belowdisplayshortskip}{5pt}

\title{Bayesian Finite Mixtures of Ising Models}
\author{
Zhen Miao\thanks{Department of Statistics, University of Washington, Seattle; e-mail: {\tt zhenm@uw.edu}},~
Yen-Chi Chen\thanks{Department of Statistics, University of Washington, Seattle; e-mail: {\tt yenchic@uw.edu}},~
Adrian Dobra\thanks{Department of Statistics, University of Washington, Seattle; e-mail: {\tt adobra@uw.edu}}
}
\maketitle

\begin{abstract}
We introduce finite mixtures of Ising models as a novel approach to study multivariate patterns of associations of binary variables. Our proposed models combine the strengths of Ising models and multivariate Bernoulli mixture models. We examine conditions required for the identifiability of Ising mixture models, and develop a Bayesian framework for fitting them. Through simulation experiments and real data examples, we show that Ising mixture models lead to meaningful results for sparse binary contingency tables.
\end{abstract}

\section{Introduction}
Loglinear models have been widely used in the analysis of multivariate categorical data in many scientific fields, such as biological sciences, natural language processing, data mining \citep{bishop1975discrete,christensen1997log-linear} due to their ability to capture first, second and higher order interactions among the observed variables. They originated from testing for the absence of interactions in $2\times2\times2$ contingency tables \citep{bartlett1935contingency}, and were later generalized to multidimensional contingency tables \citep{roy1956hypothesis,darroch1962interactions,good1963maximum,goodman1963methods}. In this paper we focus on Ising models \citep{kindermannsnell1980} which can be viewed as graphical loglinear models for binary variables that include only first order interaction terms \citep{lauritzen1996}. Ising models have particular relevance in network analysis of binary data \citep{vanborkulo2014}.

Various frequentist approaches for estimation of loglinear models have been proposed in the literature. For example, the existence and uniqueness of maximum likelihood estimators (MLEs) have been studied for different types of tables, e.g., three-way contingency tables \citep{birch1963maximum},  
general contingency tables \citep{haberman1974analysis,aickin1979existence,verbeek1992compactification}, and 
sparse contingency tables \citep{fienberg2012maximum}. The computation of the MLEs can be performed using closed-form expressions \citep[Chapter 3.4]{bishop1975discrete}, via iterative proportional fitting based on matrix inversion techniques \citep{goodman1964simple} or Newton-Raphson techniques \citep{haberman1974analysis}.
\citet{fienberg2000contingency} provides a comprehensive review. Bayesian approaches for loglinear modelling have also received a lot of interest, with a particular focus on the development of suitable prior distributions. Key examples include the multivariate normal prior \citep{knuiman1988incorporating,dellaportas1999markov,brooks2001prior,dobra-et-2006}, the spike-and-slab prior \citep{rockova2018particle},
the hyper Dirichlet conjugate prior \citep{dawid1993hyper-markov} and its generalization, as well as the Diaconis–Ylvisaker (DY) conjugate prior \citep{massam2009conjugate}. The use of the DY prior for model selection has been studied  in \citet{dobra2010mode,letac2012bayes}.

Sparse contingency tables raise key issues related to estimation and fit for Ising models as well as for the larger family of loglinear models. Scalability of Ising models to discrete datasets with many variables has been solved in various ways \--- see, among others, \citet{ravikumar2010highdimensional}. Sparse contingency tables are also characterized by the imbalance of their cell counts \citep{dobralenkoski2011}. Ising models together with the richer class of loglinear models tend to oversmooth the fitted cell probabilities which makes them unable to capture the magnitude of the larger cell counts. To this end, several classes of mixture models have been proposed as alternatives. Multivariate Bernoulli mixture models \citep{carreira2000practical,allman2009identifiability} have shown promising results in applications \citep{juan2002use,juan2004bernoulli}. Other classes of mixture models for categorical data include parallel factor analysis (PARAFAC) \citep{bro1997parafac}, simplex factor models \citep{bhattacharya2012simplex}, sparse PARAFAC \citep{zhou2015bayesian}, the Tucker decomposition \citep{deLathauwer2000multilinear}, and the collapsed Tucker decomposition \citep{johndrow2017tensor}. While these mixture models perform well with respect to fit for sparse contingency tables, they are not easily interpretable especially when it comes to inferring relevant patterns of multivariate associations among discrete variables. We note two studies exploring the connections between the two modeling paradigms. \citet{papathomas2016exploring}  leverages mixture models to reduce the number of parameters in a loglinear model, while \citet{johndrow2017tensor} delves into the relationship between dimension reduction in mixture models and loglinear models.

A common issue in finite mixture models is their identifiability \citep{teicher1967identifiability,yakowitz1968identifiability,titterington1985statistical}.
Identifiability has been studied for various types of mixture models such as uniform mixtures and binomial mixtures \citep{teicher1961identifiability}, normal mixtures, exponential mixtures, Gamma mixtures \citep{teicher1963identifiability}, Poisson mixtures \citep{teicher1960mixture}, and negative binomial mixtures \citep{titterington1985statistical}.
For general mixture models, \cite{teicher1963identifiability} suggests using moment generating functions to prove identifiability, and \cite{teicher1967identifiability} presents sufficient conditions for the mixture of product densities.
The most recent identifiability results are for multivariate Bernoulli mixture models through the conditional independence assumption \citep{allman2009identifiability,xu2017identifiability}.

The novel contributions of our work are as follows. In Section~\ref{subsec:bayesianMethods}, we propose a novel Bayesian method for fitting finite mixtures of Ising models with the modeling goal of inferring associations between binary variables. In Sections~\ref{subsec:simulationLogLinear} and~\ref{subsec:applicationlogLinear}, our novel framework is illustrated through simulation experiments and real data applications.
We provide sufficient and necessary conditions for identifiability of the model in Section~\ref{subsec:identifiability} with proofs in Section~\ref{sec:proofs}. Finally, in Section~\ref{sec:discussion}, we discuss our results together with several potential extensions.

\section{Ising Mixture Models}
\label{subsec:bayesianMethods}

\subsection{Notation}
Let $\bm X:=(X_1,\ldots,X_d)^T$ be a  vector of $d\in\mathbb{N}^+$ binary random variables, each taking values of $0$ or $1$.
The set of possible values for $\bm X$ is denoted as $I:=\{0,1\}^d$ with elements $\bm i = (i_1,\ldots,i_d)\in I$ assumed to be order lexicographically. The vector of cell probabilities of $\bm X$ are $(P(\bm X=\bm i):\bm i\in I)^T$.

Let the main effect of $X_v$ be denoted by $\theta_v\in\mathbb{R}$, where $v\in[d]=\{1,2,\ldots,d\}$. The interaction effect between $X_{v^\prime}$ and $X_v$ is denoted by $\theta_{v^\prime v}\in\mathbb{R}$, where $v^\prime < v$ and both $v^\prime, v \in [d]$. We say that ${\bm X}$ follows an Ising model if 
the logarithm of the probability associated with cell $\bm i \in I$ is proportional to a linear combination of main effects $(\theta_v:v\in[d])^T$ and interaction effects $(\theta_{v^\prime v}:v^\prime<v)^T$:
\[
\log p_{\bm i}(\bm\theta)
=\sum_{v=1}^d\theta_vi_v+\sum_{v^\prime=1}^{d-1}\sum_{v=v^\prime+1}^d\theta_{v^\prime v}i_{v^\prime}i_v+C(\bm\theta),
\]
where $C(\bm\theta)$ is the logarithm of the normalizing constant, $\bm\theta$ represents the union of main effects and interaction effects, i.e., $\bm\theta=(\theta_1,\ldots,\theta_d,\theta_{12},\ldots,\theta_{(d-1)d})$ and $p_{\bm i}(\bm\theta) = P({\bm X} = {\bm i}\mid \bm\theta$). If $\theta_{v^\prime v}=0$, variables $X_{v^\prime}$ and $X_v$ are conditionally independent given the rest. The main effects and the interaction terms can be interpreted as conditional log odds and log odds ratios \citep{agresti2002}.

The Ising model can be expressed as
\begin{equation}
\bm p(\bm\theta)
=(p_{\bm i}(\bm\theta):\bm i\in I)^T
=\exp(A^T\bm\theta)/[\bm 1_{|I|}^T\exp(A^T\bm\theta)], \label{eq:ising}    
\end{equation}
where $\bm 1_{|I|}$ is a $|I|$-dimensional constant vector of all ones, 
$A\in\mathbb{R}^{[d(d+1)/2]\times|I|}$ is a conventionally defined constant design matrix \citep{wang2019approximating}. Here applying the exponential function $\exp(\cdot)$ to a vector means applying it element-wise to obtain a vector. We illustrate the definition of $A$ through an example.

\begin{example}
Suppose we have two binary random variables $\bm X=(X_1,X_2)^T$ that follow an Ising model with main effects $\theta_1,\theta_2$ and interaction effect $\theta_{12}$.
The following linear combination
\[
\log p_{\bm i}=\log p_{(i_1,i_2)}=\theta_1i_1+\theta_2i_2+\theta_{12}i_1i_2+C(\bm\theta),
\]
for $\bm i\in I=\{(0,0),(1,0),(0,1),(1,1)\}$,
is equivalent to
\begin{align*}
\begin{pmatrix}
\log p_{(0,0)}\\
\log p_{(1,0)}\\
\log p_{(0,1)}\\
\log p_{(1,1)}
\end{pmatrix} - C(\bm\theta)
=
\begin{pmatrix}
\theta_1\cdot0+\theta_2\cdot0+\theta_{12}\cdot0\cdot0\\
\theta_1\cdot1+\theta_2\cdot0+\theta_{12}\cdot1\cdot0\\
\theta_1\cdot0+\theta_2\cdot1+\theta_{12}\cdot0\cdot1\\
\theta_1\cdot1+\theta_2\cdot1+\theta_{12}\cdot1\cdot1
\end{pmatrix}
=
\begin{pmatrix}
0\\
\theta_1\\
\theta_2\\
\theta_1+\theta_2+\theta_{12}
\end{pmatrix}
=A^T
\begin{pmatrix}
\theta_1\\
\theta_2\\
\theta_{12}
\end{pmatrix}
,
\end{align*}
where $A^T=[[0,0,0],[1,0,0],[0,1,0],[1,1,1]]$.
\end{example}

We say $X$ follows an Ising mixture model if its vector of cell probabilities is expressed as a finite mixture of Ising models, i.e.
\begin{equation}
(P(\bm X=\bm i):\bm i\in I)^T=\bm p_{\rm mix}(\bm w,\bm\Theta)
:=(p_{{\rm mix},\bm i}(\bm w,\bm\Theta),\bm i\in I)^T
:=
\sum_{k=1}^Kw^{(k)}\bm p(\bm\theta^{(k)}), \label{eq:isingmixture}    
\end{equation}
with
\[
\bm p(\bm\theta^{(k)})=\exp(A^T\bm\theta^{(k)})/[\bm 1_{|I|}^T\exp(A^T\bm\theta^{(k)})].
\]
Here $K\in\mathbb{N}^+$ is the number of components, $\bm w=(w_k:k\in[K])^T\in(0,1)^K$ represents the weights of the $K$ components with $\sum_{k\in[K]}w_k=1$, $\bm\theta^{(k)}:=(\theta_1^{(k)},\ldots,\theta_d^{(k)},\theta_{12}^{(k)},\ldots,\theta_{(d-1)d}^{(k)})\in\mathbb{R}^{d(d+1)/2}$ is the vector of main effects and interaction effects for component $k$. We define the vector of parameters of the Ising mixture model as $\bm\Theta:=(\bm\theta^{(k)},k\in[K])\in\mathbb{R}^{Kd(d+1)/2}$.

The Ising mixture model says that the binary random vector $\bm{X}$ is drawn from $K$ subpopulations with probabilities $\bm{w}$. Given the $k$-th subpopulation, $k\in[K]$, $\bm{X}$ follows an Ising model with parameters $\bm{\theta}^{(k)}$.

We assume that the observed data consist of $N$ i.i.d. observations of $\bm X$ under the simple multinomial sampling theme \citep{cochran1952chisquare}. It then follows that the resulting cell counts ${\bm n}:=(n_{\bm i}:{\bm i} \in I)^T$ follow a Multinomial($N,\bm p$) distribution, where $N=\sum_{\bm i\in I}n_{\bm i}$. Let $\|\cdot\|_2$ be the Euclidean norm of a vector.

\subsection{Prior specification}
In our proposed Bayesian framework, we assume that the mixture weights $\bm w$ follow a Dirichlet distribution ${\rm Dirichlet}(\bm \alpha)$ with parameters $\bm\alpha:=(\alpha^{(k)},k\in[K])$ with $\alpha^{(k)}>0$. This is a common choice for nonnegative parameters that sum to 1 \citep{olkin1964multivariate}. For each component $k$, we assume that the main effects $(\theta_{v}^{(k)},v\in[d])^T$ independently and identically follow a normal distribution with mean $0$ and variance $\sigma_1^2>0$, denoted by $N(0,\sigma_1^2)$. We also assume that the interaction effects $(\theta_{v^\prime v}^{(k)},v^\prime<v)^T$ independently and identically follow a continuous spike-and-slab prior with spike variance $\sigma_0^2$ and slab variance $\sigma_1^2$ where $0<\sigma_0<\sigma_1$. Specifically, 
\begin{equation}
\theta_{v^\prime v}^{(k)}|\gamma_{v^\prime v}^{(k)}\sim(1-\gamma_{v^\prime v}^{(k)})N(0,\sigma_0^2)+\gamma_{v^\prime v}^{(k)}N(0,\sigma_1^2), \label{eq:spikeandslab}
\end{equation}
where $\gamma_{v^\prime v}^{(k)}$ is the indicator of the association between variables $X_{v^\prime}$ and $X_v$ in the $k$-th component, and it is assumed to follow a Bernoulli distribution with known parameter $\beta\in(0,1)$. More precisely, $\gamma_{v^\prime v}^{(k)} = 0$ indicates that the interaction effect between variables $v^\prime$ and $v$ in the $k$th component is more likely to be close to $0$, while a value of $\gamma_{v^\prime v}^{(k)} = 1$ implies that this interaction effect is more likely to be non-zero. This is particularly clear when the spike variance, $\sigma_0^2$, is set to zero, resulting in a point-mass mixture of a point mass at $0$ and a normal distribution for the interaction effects. Denote $\bm\gamma^{(k)}:=(\gamma_{v^\prime v}^{(k)},v^\prime<v)^T$ and let 
$\bm\Gamma:=(\bm\gamma^{(k)},k\in[K])\in\{0,1\}^{Kd(d-1)/2}$. The collection of binary random variables $\bm\Gamma$ is of key interest since it represents the presence of non-zero interaction effects between variables in each component.
Moreover, We use a directed acyclic graph to illustrate the associations between parameters in our model specification, see Figure~\ref{fig:DAGforParameters}.

The Normal distributions in the spike-and-slab prior can be changed to other distributions such as the Laplace distribution. The continuous spike-and-slab prior, which serves as the predecessor to the point-mass mixture, has been gaining renewed attention in recent years \citep{rockova2014emvs,rockova2018spike}. The continuity of this prior allows for a more fluid exploration of posteriors using both MCMC and optimization techniques due to its ability to decrease the spike variance to zero and explore the entire path of posteriors as it approaches the point mass mixture. In addition to computational benefits, continuous mixture priors can also exhibit optimal posterior behavior, such as the oracle property of the posterior mean \citep{ishwaran2005spike,rockova2018particle}.

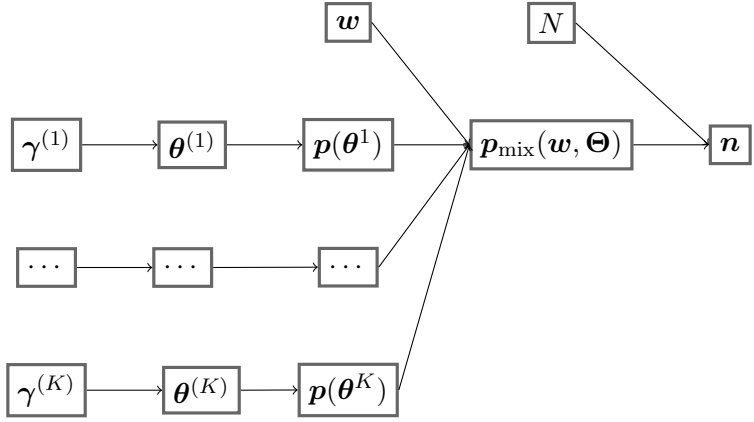
\begin{figure}
\centering
\begin{tikzpicture}[
squarednode/.style={rectangle, draw=black!60, very thick, minimum size=5mm},
]
\node[squarednode](gamma1){$\bm\gamma^{(1)}$};
\node[squarednode](gammak)[below=of gamma1]{$\cdots$};
\node[squarednode](gammaK)[below=of gammak]{$\bm\gamma^{(K)}$};

\node[squarednode](theta1)[right=of gamma1]{$\bm\theta^{(1)}$};
\node[squarednode](thetak)[right=of gammak]{$\cdots$};
\node[squarednode](thetaK)[right=of gammaK]{$\bm\theta^{(K)}$};

\node[squarednode](p1)[right=of theta1]{$\bm p(\bm\theta^{1})$};
\node[squarednode](pk)[below=of p1]{$\cdots$};
\node[squarednode](pK)[below=of pk]{$\bm p(\bm\theta^{K})$};
\node[squarednode](w)[above=of p1]{$\bm w$};

\node[squarednode](pmix)[right=of p1]{$\bm p_{\rm mix}(\bm w,\bm\Theta)$};
\node[squarednode](N)[above=of pmix]{$N$};
\node[squarednode](n)[right=of pmix]{$\bm n$};
\draw[->] (gamma1.east) -- (theta1.west);
\draw[->] (gammak.east) -- (thetak.west);
\draw[->] (gammaK.east) -- (thetaK.west);
\draw[->] (theta1.east) -- (p1.west);
\draw[->] (thetak.east) -- (pk.west);
\draw[->] (thetaK.east) -- (pK.west);
\draw[->] (p1.east) -- (pmix.west);
\draw[->] (pk.east) -- (pmix.west);
\draw[->] (pK.east) -- (pmix.west);
\draw[->] (w.east) -- (pmix.west);
\draw[->] (N.east) -- (n.west);
\draw[->] (pmix.east) -- (n.west);
\end{tikzpicture}
\caption{A directed acyclic graph to illustrate the associations between parameters in our model specification.}
\label{fig:DAGforParameters}
\end{figure}

\subsection{Posterior distribution}

To obtain the joint posterior distribution of $\bm w$, $\bm\Theta$ and $\bm\Gamma$, we begin by discussing the Ising model \eqref{eq:ising}, and then consider the Ising mixture model \eqref{eq:isingmixture} with $K\ge 2$ components.

\subsubsection{The Ising model}
Under Multinomial sampling, we have ${\bm n}\mid \bm\theta\sim{\rm Multinomial}(N,\bm p(\bm\theta))$.
It then follows that the probability mass function of $\bm n$ given $\bm\theta$ is
\begin{align*}
\pi({\bm n}\mid \bm\theta)
=\frac{N!}{\prod_{\bm i\in I} n_{\bm i}!}[{\bm 1}_{|I|}\cdot\exp(A^{T}\bm\theta)]^{-N}\exp(\bm n^TA^T\bm\theta)
=\frac{N!}{\prod_{\bm i\in I} n_{\bm i}!}
\exp[N\ell(\bm\theta|\bm n)],
\end{align*}
where $\ell(\bm\theta\mid \bm n):=-\log[{\bm 1}_{|I|}\cdot\exp(A^{T}\bm\theta)]+\bm n^TA^T\bm\theta/N$ is the log-likelihood function of the Ising model. Thus, the joint distribution 
$\pi(\bm n,\bm\theta,\bm\gamma)
=\pi({\bm n}|\bm\theta)\pi(\bm\theta|\bm\gamma)\pi(\bm\gamma)$ 
can be further written as
\begin{align*}
\pi(\bm n,\bm\theta,\bm\gamma)
&=\frac{N!}{\prod_{\bm i\in I} n_{\bm i}!}\exp[N\ell(\bm\theta)]
\cdot\prod_{v\in[d]}
\frac{1}{\sigma_1\sqrt{2\pi}}\exp\Big(-\frac{\theta_{v}^2}{2\sigma_1^2}\Big)\\
&\quad\cdot
\prod_{v^\prime<v}\Big(\frac{1}{\sigma_0\sqrt{2\pi}}\exp\Big(-\frac{\theta_{v^\prime v}^2}{2\sigma_0^2}\Big)\Big)^{1-\gamma_{v^\prime v}}
\Big(\frac{1}{\sigma_1\sqrt{2\pi}}\exp\Big(-\frac{\theta_{v^\prime v}^2}{2\sigma_1^2}\Big)\Big)^{\gamma_{v^\prime v}}\\
&\quad\cdot
\prod_{v^\prime<v}\beta^{\gamma_{v^\prime v}}(1-\beta)^{1-\gamma_{v^\prime v}}.
\end{align*}
After some algebra, we obtain
\begin{align}
\pi(\bm\gamma\mid \bm n)
&\propto
\int
\exp\left[N\ell(\bm\theta\mid\bm n)-\frac{\sum_{v\in[d]}\theta_{v}^2}{2\sigma_1^2}\right]\cdot
\prod_{v^\prime<v}\left[\exp\left(\frac{-\theta_{v^\prime v}^2}{2\sigma_0^2}\right)\right]^{1-\gamma_{v^\prime v}}\left[\frac{\beta\sigma_0}{(1-\beta)\sigma_1}\exp\left(\frac{-\theta_{v^\prime v}^2}{2\sigma_1^2}\right)\right]^{\gamma_{v^\prime v}}{\sf d}\bm\theta.
\label{eq:posteriorOfGamma}
\end{align}
We note that the posterior distribution of $\bm\gamma$ is a mixture of Bernoulli distributions.
Specifically, given $\bm\theta$, the posterior distribution of $\bm\gamma$ is 
\[
\gamma_{v^\prime v}|\bm\theta\sim{\rm Bernoulli}\left(\frac{1}{1+(1-\beta)\sigma_1/(\beta\sigma_0)\cdot e^{\theta_{v^\prime v}^2(1/\sigma_1^2-1/\sigma_0^2)/2}}\right),v^\prime<v,~{\rm independently}.
\]
As a consequence, the posterior mean of $\bm\gamma$ can be written as 
$
E_{\theta_{v^\prime v}\sim \pi(\bm\theta\mid\bm n)}[r(\theta_{v^\prime v})],
v^\prime<v,
$
where 
\begin{align}
\theta\mapsto 
r(\theta):=\frac{1}{1+(1-\beta)\sigma_1/(\beta\sigma_0)\cdot e^{\theta^2(1/\sigma_1^2-1/\sigma_0^2)/2}},
\label{eq:r(theta)}
\end{align} 
For any fixed $\sigma_1>0$, we have $r(\theta)\to I(\theta\neq0)$ as $\sigma_0\to0$ for any $\theta\in\mathbb{R}$.
This means that $r(\theta)$ can be interpreted as a smoothed measurement of whether the interaction effect $\theta$ is zero. For a small $\sigma_0$, $r(\theta)$ will be still close to $1$ even if $|\theta|$ is small.
This is interpreted that the sensitivity of measuring $I(\theta\neq0)$ increases as $\sigma_0$ decreases.
It is worth noting that the function $r(\theta)$ has a lower bound that is always positive due to the continuous spike-and-slab prior. Specifically, the lower bound is given by $r(\theta)\geq r(0)=1/[1+(1-\beta)\sigma_1/(\beta\sigma_0)]$, where $r(0)$ is a function that monotonically increases as $\sigma_0$ increases. The posterior density $\pi(\bm\theta\mid\bm n)$ is proportional to

\begin{align*}
h_1(\bm\theta):=
\exp\left(-N\log[{\bm 1}_{|I|}^T\exp(A^T\bm\theta)]+\bm n^T\bm A^T\bm\theta-\frac{\bm\theta^T\bm\theta}{2\sigma_1^2}\right)
\cdot\prod_{v^\prime<v}\left[\frac{(1-\beta)\sigma_1}{\beta\sigma_0}\exp\Big(\frac{\theta_{v^\prime v}^2}{2\sigma_1^2}-\frac{\theta_{v^\prime v}^2}{2\sigma_0^2}\Big)+1\right].
\end{align*}

In other words, $E(\bm\gamma\mid\bm n)=E_{\bm\theta\sim \pi(\bm\theta\mid\bm n)}[r(\bm\theta)]$, where applying $r(\cdot)$ to a vector means applying it element-wisely, i.e., $r(\bm\theta)=(r(\theta_{v^\prime v}):v^\prime<v)^T$.

\subsubsection{The Ising mixture model}

The posterior density of $\bm w$, $\bm\Theta$ and $\bm\Gamma$ can be written as 
\begin{align*}
\pi(\bm w,\bm\Theta,\bm\Gamma\mid\bm n)
=\prod_{k,v^\prime<v}\pi(\gamma_{v^\prime v}^{(k)}\mid\theta_{v^\prime v}^{(k)})\cdot\pi(\bm w,\bm\Theta\mid\bm n),
\end{align*}
where the posterior distribution of $\gamma_{v^\prime v}^{(k)}$ given $\theta_{v^\prime v}^{(k)}$ follows a ${\rm Bernoulli}\left(r(\theta_{v^\prime v}^{(k)})\right)$ distribution and $\pi(\bm w,\bm\Theta\mid\bm n)$ is the posterior density of $\bm w,\bm\Theta$ which is proportional to
\[
h_4(\bm w,\bm\Theta\mid\bm n):=\exp[
N\tilde{\ell}(\bm w,\bm\Theta\mid\bm n)
]
\cdot\prod_{v^\prime<v,k}\Big[\frac{(1-\beta)\sigma_1}{\beta\sigma_0}\exp\Big(\frac{[\theta_{v^\prime v}^{(k)}]^2}{2\sigma_1^2}-\frac{[\theta_{v^\prime v}^{(k)}]^2}{2\sigma_0^2}\Big)+1\Big],
\]
where
$$\tilde{\ell}(\bm w,\bm\Theta\mid\bm n)=
\ell(\bm w,\bm\Theta\mid\bm n)
+\sum_{k\in[K]}(\alpha_k-1)\log(w_k)/N
-\|\bm\Theta\|_2^2/(2N\sigma_1^2),
$$ 
and
$\ell(\bm w,\bm\Theta\mid\bm n)$ is the log-likelihood function of the Ising mixture model 
\[
\ell(\bm w,\bm\Theta\mid\bm n)=\frac{\bm n^T}{N}\log\Big(\sum_{k\in[K]}w^{(k)}\exp[A^T\bm\theta^{(k)}-\log(\bm 1_{|I|}^T\exp(A^T\bm\theta^{(k)}))]\Big).
\]

Given $\bm\Theta$, the posterior distribution of $\bm\gamma$ does not depend on the data $\bm n$. The posterior distribution of $\gamma_{v^\prime v}^{(k)}$ given $\bm n$ is a mixture of Bernoulli distributions with 
$
E(\gamma_{v^\prime v}^{(k)}\mid\bm n)=
E_{\theta_{v^\prime v}^{(k)}\sim \pi(\bm w,\bm\Theta\mid\bm n)}
[r(\theta_{v^\prime v}^{(k)})].
$
The posterior mean of $\bm w$ is
$
E(\bm w\mid\bm n)
=E_{\bm w\sim \pi(\bm w,\bm\Theta\mid\bm n)}\bm w.
$
We are specifically interested in the posterior mean of $\gamma_{v^\prime v}^{(k)}$ because its value between $0$ and $1$ reflects the magnitude of $|\theta_{v^\prime v}^{(k)}|$ through the function $|\theta|\mapsto r(|\theta|)$. As the true value of $|\theta|$ increases, the posterior mean of $\bm\Gamma$ also approaches 1, allowing us to identify the most significant non-zero interaction effects. 

\subsection{Computing posterior means}

We present importance sampling algorithms for computing the posterior means of $\bm\Gamma$ in the Ising mixture model. In the case of a single component $K=1$, we use the normal approximation as the sampling distribution. In the case of multiple components $K\geq 2$, we use the normal mixture approximation instead.

\subsubsection{The Ising model}

Recall that
$
E(\bm\gamma\mid\bm n)=E_{\bm\theta\sim \pi(\bm\theta\mid\bm n)}[r(\bm\theta)],
$
where $r(\cdot)$ is defined in \eqref{eq:r(theta)}
and
\[
\pi(\bm\theta\mid\bm n)\propto h_1(\bm\theta)
=\exp(N\tilde\ell(\bm\theta))\cdot\prod_{v^\prime<v}\left[\frac{(1-\beta)\sigma_1}{\beta\sigma_0}\exp\Big(\frac{\theta_{v^\prime v}^2}{2\sigma_1^2}-\frac{\theta_{v^\prime v}^2}{2\sigma_0^2}\Big)+1\right]
\]
with
$
\tilde\ell(\bm\theta)
:=\ell(\bm\theta)-\bm\theta^T\bm\theta/[2N\sigma_1^2]
$ and
$\ell(\bm\theta)
:=-\log[{\bm 1}_{|I|}^T\exp(A^{T}{\bm\theta})]+\bm n^T\bm A^T\bm\theta/N
$.
Note that $\ell(\bm\theta)$ is the log-likelihood function of Ising models and $\tilde{\ell}(\bm\theta)$ is its regularized version. Because the function
$$(y_1,\ldots,y_n)\mapsto \log[\exp(y_1)+\ldots+\exp(y_n)],$$  is convex, the function $\bm\theta\mapsto\tilde{\ell}(\bm\theta)$ is strictly concave, thus it has a unique maximum at the point $\tilde{\bm\theta}:=\argmax_{\bm\theta}\tilde\ell(\bm\theta)$. It follows from the Taylor series of the regularized log-likelihood $\tilde\ell$ that
\[
\tilde\ell(\bm\theta)\approx\tilde\ell(\tilde{\bm\theta})-\frac{1}{2}(\bm\theta-\tilde{\bm\theta})^T
\widetilde{\Sigma}^{-1}
(\bm\theta-\tilde{\bm\theta}),
\]
where
$\widetilde{\Sigma}$ is the inverse of Hessian matrix of $\bm\theta\mapsto-\tilde\ell(\bm\theta)$ at $\tilde{\bm\theta}$. Thus $\exp(N\tilde\ell(\bm\theta))$ can be approximated by a Normal density with mean $\tilde{\bm\theta}$ and covariance $\widetilde{\Sigma}/N$ (up to multipling a constant). After some algebra it follows that

\[
\widetilde{\Sigma}=\left[A\left\{\frac{{\rm diag}(\exp(A^T\bm\theta))}{\bm 1_{|I|}^T\exp(A^T\bm\theta)}-\frac{\exp(A^T\bm\theta)\exp(A^T\bm\theta)^T}{[\bm 1_{|I|}^T\exp(A^T\bm\theta)]^2}\right\}A^T+\frac{\bm I_{d(d+1)/2}}{N\sigma_1^2}\right]^{-1},
\]

\noindent where $\rm{diag}(\exp(A^T\bm\theta))$ represents a diagonal matrix with diagonal $\exp(A^T\bm\theta)$. Let $h_2(\bm\theta)$ be the density function of $N(\tilde{\bm\theta},\widetilde{\Sigma}/N)$. We obtain that the ratio $h_1(\bm\theta)/h_2(\bm\theta)$ is proportional to
\begin{align*}h_3(\bm\theta)
:=\exp\left(N\tilde{\ell}(\bm\theta)-N\tilde{\ell}(\tilde{\bm\theta})+\frac{N}{2}(\bm\theta-\tilde{\bm\theta})^T\widetilde{\Sigma}^{-1}(\bm\theta-\tilde{\bm\theta})\right)\cdot
\prod_{v^\prime<v}\left[\frac{(1-\beta)\sigma_1}{\beta\sigma_0}\exp\Big(\frac{\theta_{v^\prime v}^2}{2\sigma_1^2}-\frac{\theta_{v^\prime v}^2}{2\sigma_0^2}\Big)+1\right].
\end{align*}

The importance sampling estimate of the posterior mean of $\bm \gamma$ is 
\[
E(\bm \gamma\mid\bm n)
=\frac{E_{\bm\theta\sim h_2}r(\bm\theta)h_3(\bm\theta)}{E_{\bm\theta\sim h_2}h_3(\bm\theta)}
\approx
\frac{\frac{1}{M}\sum_{m\in[M]}r(\bm\theta_m)h_3(\bm\theta_m)}{\frac{1}{M}\sum_{m\in[M]}h_3(\bm\theta_m)}
=\sum_{m\in[M]}r(\bm\theta_m)\frac{h_3(\bm\theta_m)}{\sum_{m\in[M]}h_3(\bm\theta_m)},
\]
where $(\bm\theta_m,m\in[M])^T$ are i.i.d. sampled from $N(\tilde{\bm\theta},\widetilde{\Sigma}/N)$.

\subsubsection{The Ising mixture model}

Given the complexity of computing the posterior mean
of $\gamma$ in the Ising  model,
one can easily see that the computation of the posterior mean
under Ising mixture model is a lot harder. 
To resolve this issue, we use the normal mixture approximation \citep[page 85-86]{gamerman2006markov} instead of the normal approximation because the log-likelihood function $\bm w,\bm\Theta\mapsto\ell(\bm w,\bm\Theta\mid\bm n)$ may have multiple modes. The normal mixture sampling distribution denoted by $h_5(\bm w,\bm\Theta)$ can be constructed as follows.

\begin{framed}
\begin{enumerate}
\item Initialize a random start value of $\bm w,\bm\Theta$ and find a local optimal point $\tilde{\bm w},\tilde{\bm\Theta}$ that minimizes $\bm w,\bm\Theta\mapsto\tilde{\ell}(\bm w,\bm\Theta\mid\bm n)$ based on this start value.
\item Repeat the last step $J$ times. These local optimal points are denoted by $\{\tilde{\bm w}_j,\tilde{\bm\Theta}_j\}_{j=1}^{J}$ and the corresponding optimal value is $\tilde{\ell}_j:=\tilde{\ell}(\tilde{\bm w}_j,\tilde{\bm\Theta}_j\mid\bm n)$. 
Let $\tilde{\Sigma}_j$ be the inverse of Hessian matrix of $\bm\Theta\mapsto-\tilde\ell(\tilde{\bm w}_j,\bm\Theta)$ at $\tilde{\bm\Theta}_j$.
\item Let $f(\bm\Theta\mid\tilde{\bm\Theta}_j,\tilde{\Sigma}_j/N)$ be the density function of $N(\tilde{\bm\Theta}_j,\tilde{\Sigma}_j/N)$ and let $f(\bm w\mid N\tilde{\bm w}_j+\bm1_K)$ be the density function of Dirichlet distribution with parameters $N\tilde{\bm w}_j+\bm1_K$.
\item Then let $h_5(\bm w,\bm\Theta)$ be $\sum_{j\in[J]}\frac{\exp{(\tilde{\ell}_j)}}{\sum_{j\in[J]}\exp{(\tilde{\ell}_j)}}f(\bm w\mid N\tilde{\bm w}_j+\bm1_K)f(\bm\Theta\mid\tilde{\bm\Theta}_j,\tilde{\Sigma}_j/N)$.
\end{enumerate}
\end{framed}
\noindent
The number of components, $J$, in the normal mixture sampling distribution can be chosen arbitrarily, but using $J=5$ or 10 is typically sufficient to capture the majority of important modes.

The first two steps identify main local maximum points and preparing for normal approximations for each of them. The sampling distributions are constructed in the third step. We choose to use the normal approximation for the main effects and interaction effects $\bm\Theta$. We note that the mode of the Dirichlet sampling distribution of the weights $\bm w\in(0,1)^K$ corresponds to the local maximum points. 
Its parameters are then scaled by $N$ to account for the sample size, which can be justified by equating the second derivative of the objective function with that of the sampling density function in the specific scenario where the number of components $K$ is 2. These sampling distributions are combined into the full sampling distribution in which their weights are given by the corresponding values of the likelihood. The components with higher likelihood receive higher weights in the full sampling distribution.

The posterior mean of $\bm \Theta$ and $\bm w$ is given by 
\begin{align*}
E(\bm\Gamma\mid\bm n)
&=E_{\bm\Theta\sim\pi(\bm w,\bm\Theta\mid\bm n)}r(\bm\Theta)
=\int \pi(\bm w,\bm\Theta\mid\bm n)r(\bm\Theta){\sf d}\bm\Theta{\sf d}\bm w
=\int \frac{\pi(\bm w,\bm\Theta\mid\bm n)}{h_5(\bm w,\bm\Theta)}h_5(\bm w,\bm\Theta)r(\bm\Theta){\sf d}\bm\Theta{\sf d}\bm w\\
&=E_{\bm w,\bm\Theta\sim h_5(\bm w,\bm\Theta)}\frac{\pi(\bm w,\bm\Theta\mid\bm n)}{h_5(\bm w,\bm\Theta)}r(\bm\Theta)
=\frac{E_{\bm w,\bm\Theta\sim h_5(\bm w,\bm\Theta)}\frac{h_4(\bm w,\bm\Theta\mid\bm n)}{h_5(\bm w,\bm\Theta)}r(\bm\Theta)}{E_{\bm\Theta,\bm w\sim h_5(\bm w,\bm\Theta)}\frac{h_4(\bm w,\bm\Theta\mid\bm n)}{h_5(\bm w,\bm\Theta)}},
\end{align*}
where $\pi(\bm\Theta,\bm w\mid\bm n)\propto h_4(\bm\Theta,\bm w\mid\bm n)$.

As the sample size $N$ increases, the regularized log-likelihood $\tilde{\ell}$ gets closer to the log-likelihood function $\ell$. Its maximum optimal point $\tilde{\bm\theta}=\argmax_{\bm\theta}\tilde{\ell}(\bm\theta)$ gets closer to that of the log-likelihood function, $\argmax_{\bm\theta}\ell(\bm\theta)$. If the Ising mixture model is identifiable, the mean of this sampling distribution converges to the true values of $\bm\Theta$ as $N$ goes to infinity. Additionally, as $N$ increases, the covariance matrix of the sampling distribution, $\tilde{\Sigma}/N$, converges to the zero matrix, indicating that the sampling distribution is becoming more concentrated at the true values of interaction effects. It is worth noting that the classical MLE usually plays an important role in the sampling algorithm in Bayesian analysis, as demonstrated in various studies \citep{dobra2010mode,fienberg2012maximum}.
For Ising mixture models the regularized log-likelihood function $\tilde{\ell}$ can also be replaced by the log-likelihood function $\ell$ in the sampling algorithm.

Since the weights $\bm w$ are between $0$ and $1$, it is not appropriate to sample them from a Normal distribution. Instead, we sample them from a Dirichlet distribution with mode $\hat{\bm w}$. The parameter of this Dirichlet distribution is set to $N\hat{\bm w}+1$ in order to reflect the increased concentration around the mode as the sample size $N$ increases.

The posterior mean of $\bm\Gamma$ remains a meaningful method for inferring associations between variables, even if the density function $\pi(\bm n\mid\bm\Gamma)$ is non-identifiable. This is due to the fact that the posterior distribution of $\bm\Gamma$ is proportional to the product of the likelihood function $\pi(\bm\Gamma\mid\bm n)$ and the prior distribution of $\bm\Gamma$, $\pi(\bm\Gamma)$. 
If $\pi(\bm n\mid\bm\Gamma)$ is non-identifiable, meaning that multiple values of $\bm\Gamma$ produce the same likelihood, the posterior mean of $\bm\Gamma$ is the weighted average of all such values as the sample size $N\to\infty$. In particular, if the prior parameter for each element $\gamma_{v^\prime v}^{(k)}$ is set to $\beta=0.5$, the posterior mean can be interpreted as a majority vote, with a value greater than $0.5$ indicating that the majority of values of $\gamma_{v^\prime v}^{(k)}$ that produce the same likelihood are $1$. Additionally, decreasing the prior parameter, e.g., $\beta<0.5$, can favor sparser association structures.

The identifiability of the density function $\pi(\bm n\mid\bm\Gamma)$ is implied by the identifiability of $\pi(\bm n\mid\bm\Theta,\bm w)$. We discuss necessary and sufficient identifiability conditions for $\pi(\bm n\mid\bm\Theta,\bm w)$ in Section~\ref{subsec:identifiability}. On the other hand, the identifiability of $\pi(\bm n\mid\bm\Gamma)$ can be partially solved by considering the rank of the observed information matrix evaluated at the MLEs for the mixture parameters, as discussed in \citet[Chapter 9.5.2]{fruhwirth2006finite}. This is a consequence of the equivalence between local identifiability and the rank of the information matrix, as established in \citep{rothenberg1971identification,catchpole1997detecting}.

We determine the Fisher information matrix of an Ising mixture model with parameters $\bm w$, $\bm\Theta$ from the log-likelihood function
$
\ell\left({\bm w,\bm\Theta} \mid \bm X\right)=\log p_{{\rm mix},\bm X}(\bm w, \bm\Theta):
$
\[
\mathcal{I}(\bm w,\bm\Theta):=-E\left[\frac{\partial^2\log p_{{\rm mix},\bm X}(\bm w, \bm\Theta)}{\partial (\bm w,\bm\Theta)^2}\right]=-\sum_{\bm i \in I} p_{{\rm mix},\bm i}(\bm w, \bm\Theta)\frac{\partial^2\log p_{{\rm mix},\bm i}(\bm w, \bm\Theta)}{\partial (\bm w,\bm\Theta)^2}.
\]
The Fisher information matrix provides a justification for the local identifiability of an Ising mixture model\--- see Section \ref{subsec:identifiability}. 

\section{Simulation experiments} \label{subsec:simulationLogLinear}

We evaluate the empirical performance of our proposed Bayesian framework for assessing the strength of association in Ising mixture models. The number of binary variables is fixed at $d=6$. 

\subsection{The Ising model}

We set the number of variables to $d=6$, the sample size to $N=10000$, and the main effects to
\[
(\theta_1,\theta_2,\theta_3,\theta_4,\theta_5,\theta_6)=(1,-1,1,-1,1,-1).
\]
For the interaction effects $(\theta_{v^\prime v}:v^\prime<v)$ we used two designs. In design A, $(\theta_{12},\theta_{13},\theta_{14},\theta_{23})=(1,-1,1,-1)$ and others are $0$. In design B, $(\theta_{12},\theta_{13},\theta_{14},\theta_{23})=(1,-0.5,0.2,-.1)$ and others are $0$. 

We chose the following combinations for the hyperparameters of the prior distributions. In Setting 1, $\sigma_0=0.1$, $\sigma_1=1$, $\beta=0.5$. In Setting 2: $\sigma_0=0.01$, $\sigma_1=1$, $\beta=0.5$. The two settings illustrate the sensitivity of the results with respect to the ratio of the two variances in the spike-and-slab prior \eqref{eq:spikeandslab}. The sampling size $M$ in the importance sampling algorithm is $10^5$.

The data consist of the six-way contingency table with counts given by $N\cdot\bm p(\bm\theta)$, where $p(\bm\theta)$ is determined as in Equation \eqref{eq:ising}. Keeping the data fixed as opposed to sampling it from ${\rm Multinomial}(N,\bm p(\bm\theta))$ allows us to evaluate he sampling error caused by the importance sampling procedure and the performance of the proposed Bayesian method as the sample size $N$ approaches infinity. The posterior mean of the association indicators, $\bm\gamma$, is reported for each combination of designs and settings in Table~\ref{table:posteriorMean}. These results are an average of 100 independent replicates of the importance sampling algorithm with $M=10^5$. 

The results in Table~\ref{table:posteriorMean} provide evidence for the effectiveness of the proposed Bayesian framework in inferring associations between variables. Under Design A, all four non-zero interaction terms have an absolute value of 1 which makes them clearly distinguishable from the other interaction terms that are set to zero. In both prior settings, the estimated posterior means of the indicators corresponding to non-zero interaction effects is $1$, while the estimated posterior means of the zero interaction effects is $0.1$ or less.

Under Design B, three of the four non-zero interaction effects have an absolute value of $0.5$ or less. Their smaller size makes them less distinguishable from the remaining interaction terms that are set to zero. In Setting 1 ($\sigma_0=.1$), the estimated posterior means of the association indicators for larger interaction effects is close to $1$. The estimated posterior means for smaller interaction effects is less than $0.5$, suggesting that these associations may be harder to identify. In Setting 2 ($\sigma_0=.01$), the estimated posterior means of the association indicators for all four non-zero interaction effects is greater than $0.5$, demonstrating the increased effectiveness of a smaller value of $\sigma_0$ in detecting small interaction effects.

The estimated posterior mean of $\bm\gamma$ is smaller in Setting 1 than in Setting 2 for the interaction effects that are set to zero. This should not be surprising, since the expectation $E(\gamma\mid\bm n)=E_{\theta\sim \pi(\bm\theta\mid\bm n)}[r(\theta)]$, where $r(\theta)$ is defined in \eqref{eq:r(theta)} is monotonically increasing with respect to $\sigma_0$ for small values of $|\theta|$. As a result, when $\sigma_0$ approaches 0, the lower bound of the posterior mean of $\gamma$, which is $r(0) = 1/(1+(1-\beta)\sigma_1/\beta\sigma_0)$, becomes smaller.
On the other hand, the sampling error is larger in Setting 2 ($\sigma_0=0.01$) compared to Setting 1 ($\sigma_0=0.1$). Under both designs the importance sampling standard errors for the posterior mean estimates of association indicators are approximately $0.0001$ in Setting 1 and $0.1$ in Setting 2. The reason relates to the sampling density function $h_2(\bm\theta)$ being closer to the objective density function $\pi(\bm\theta\mid\bm n)$ in Setting 1, resulting in a lower variance of the importance sampling method. As such, selecting the value of $\sigma_0$ involves balancing the stability of the importance sampling algorithm with the ability to detect distinguish non-zero interaction effects.

{
\renewcommand{\tabcolsep}{1.5pt}
\renewcommand{\arraystretch}{1.0}
\begin{table}[ht]
\centering
{{
\begin{tabular}{
C{.37in}C{.37in}C{.37in}C{.37in}C{.37in}C{.37in}C{.37in}
C{.37in}
C{.37in}C{.37in}C{.37in}C{.37in}C{.37in}C{.37in}C{.37in}}
\toprule
$\gamma_{12}$ & 
$\gamma_{13}$ & 
$\gamma_{14}$ &
$\gamma_{15}$ &
$\gamma_{16}$ &
$\gamma_{23}$ & 
$\gamma_{24}$ & 
$\gamma_{25}$ & 
$\gamma_{26}$ &
$\gamma_{34}$ & 
$\gamma_{35}$ & 
$\gamma_{36}$ &
$\gamma_{45}$ & 
$\gamma_{46}$ & 
$\gamma_{56}$ \\
\multicolumn{15}{c}{
Posterior mean 
under Design A and Setting 1} 
\\
\textbf{1.0} & \textbf{1.0} & \textbf{1.0} & .10 & .10 & \textbf{1.0} & .10 & .10  & .10 & .10 & .10 & .10 & .10 & .10 & .10 \\
\multicolumn{15}{c}{
Posterior mean 
under Design B and Setting 1} 
\\
\textbf{1.0} & \textbf{1.0} & .34 & .10 & .10 & .14 & .10 & .10 & .10 & .10 & .10 & .10 & .10 & .10 & .10 \\
\multicolumn{15}{c}{
Posterior mean 
under Design A and Setting 2} 
\\
\textbf{1.0} & \textbf{1.0} & \textbf{1.0} & .08 & .10 & \textbf{1.0} & .08 & .07 & .08 & .06 & .09 & .07 & .07 & .06 & .09\\
\multicolumn{15}{c}{
Posterior mean 
under Design B and Setting 2} 
\\
\textbf{1.0} & \textbf{1.0} & \textbf{1.0} & .10 & .10 & \textbf{.68} & .07 & .08 & .07 & .08 & .08 & .09 & .08 & .10 & .08\\
\bottomrule 
\end{tabular}}}
\caption{
Estimated posterior mean of $\bm\gamma$ in Ising models under Design A, B, and Setting 1 and 2. Under both designs, the importance sampling standard errors are approximately $0.0001$ for Setting 1 and $0.1$ for Setting 2.
}
\label{table:posteriorMean}
\end{table}
}

\subsection{The Ising mixture model with two components}

We set the number of variables to $d=6$, the sample size $N=10000$, the weight for the first component to $w^{(1)}=0.4$, and the main effects to $\bm\theta^{(1)}=\bm\theta^{(2)}=(1,-1,1,-1,1,-1)$. The spike-and-slab prior parameters are set to $\sigma_0=0.1$, $\sigma_1=1$, $\beta=.5$. This is Setting 1 in the previous simulation experiment. The sampling size $M$ in the importance sampling algorithm is $10^5$. The number of components $J$ in the normal mixture sampling distribution for Bayesian Ising mixture models is $5$.

The data consist of the six-way contingency table with counts given by $\bm n=N\cdot\bm p_{\rm mix}(\bm w,\bm\Theta)$ where $p_{\rm mix}(\bm w,\bm\Theta)$ is defined in Equation \eqref{eq:isingmixture}. We used two designs for the interaction effects. In design C, $(\theta^{(1)}_{12},\theta^{(1)}_{13},\theta^{(2)}_{46},\theta^{(2)}_{56})=(1,-1,1,-1)$ and others are $0$. In design D, $(\theta^{(1)}_{12},\theta^{(1)}_{13},\theta^{(1)}_{23},\theta^{(2)}_{14},\theta^{(2)}_{15})=(1,-1,1,1,-1)$ and others are $0$. Under both designs, we fit an Ising model as well as an Ising mixture model with two components. Table~\ref{table:posteriorMean3} presents the estimated posterior means of the association indicators for both models. These results are an average of 100 independent replicates of the importance sampling algorithm with $M=10^5$. The importance sampling standard error is about $0.05$ for both designs which is an indication of the stability of the importance sampling algorithm. 
As an illustration of computation time on average, the Bayesian Ising model required only 0.37 seconds, while the Bayesian two-component Ising mixture model took 5.14 minutes to complete the importance sampling algorithm under Design D and Setting 1. Both experiments were conducted on a laptop with a 1.8 GHz Intel Core i5 processor and 8 GB of memory.

Under both designs, the Ising model identifies the non-zero interaction effects from both components of the mixture. However, under Design D, it incorrectly identifies two additional interaction effects that are actually zero in both mixture components. On the other hand, the Ising mixture model with two components identifies all non-zero and all zero interaction effects in both components based on a cutoff of $0.5$ under both designs. The estimated posterior mean of the weight $w^{(1)}$ is $.40$ under Design C and $0.41$ under Design D, both estimates very close to the true value of $0.4$.

This simulation setting shows that, if an Ising mixture model with one component is fit when the data corresponds with a mixture model with two components, the inferred association structure might include pairwise effects that do not exist. This result is in a sense not surprising given that the first component has non-zero interaction effects between variables 1 and 2, variables 1 3, and variables 2 and 3, while the second component has non-zero interaction effects between variables 1 4, and variables 1 and 5. Due to this configuration, the Ising model might show non-zero interaction effects between variables 2 and 4, and variables 2 and 5. Nevertheless, the estimated posterior mean of the association indicators for these variables is lower compared to the truly non-zero associations. 

{
\renewcommand{\tabcolsep}{1.5pt}
\renewcommand{\arraystretch}{1.0}
\begin{table}[ht]
\centering
{{
\begin{tabular}{
C{1.0in}
C{.3in}C{.3in}C{.3in}C{.3in}C{.3in}C{.3in}C{.3in}
C{.3in}
C{.3in}C{.3in}C{.3in}C{.3in}C{.3in}C{.3in}C{.3in}}
\toprule
Model&
$\gamma_{12}$ & 
$\gamma_{13}$ & 
$\gamma_{14}$ &
$\gamma_{15}$ &
$\gamma_{16}$ &
$\gamma_{23}$ & 
$\gamma_{24}$ & 
$\gamma_{25}$ & 
$\gamma_{26}$ &
$\gamma_{34}$ & 
$\gamma_{35}$ & 
$\gamma_{36}$ &
$\gamma_{45}$ & 
$\gamma_{46}$ & 
$\gamma_{56}$ \\
\hline\\
\multicolumn{16}{c}{
Posterior mean 
under Design C} 
\\
Ising model&
\textbf{.99} & \textbf{.98} & .10 & .10 & .11 & .30 & .11 & .11 & .14 & .11 & .11 & .14 & .11 & \textbf{1.0} & \textbf{1.0}\\
\hline
Ising mixture, Component 1&
\textbf{1.0} & \textbf{1.0} & .17 & .16 & .20 & .19 & .16 & .15  & .17 & .18 & .17 & .18 & .16 & .19 & .19 \\
\hline
Ising mixture, Component 2&
.17 & .14 & .14 & .13 & .13 & .16 & .17 & .16  & .15 & .14 & .13 & .13 & .15 & \textbf{1.0} & \textbf{1.0} \\
\hline\\
\multicolumn{16}{c}{
Posterior mean 
under Design D} 
\\
Ising model &
\textbf{.98} & \textbf{.94} & \textbf{1.0} & \textbf{1.0} & .10 & \textbf{.88} & \textbf{.69} & \textbf{.67} & .10 & .13 & .12 & .13 & .22 & .22 & .22\\
\hline
Ising mixture, Component 1&
\textbf{.99} & \textbf{.99} & .19 & .17 & .14 & \textbf{.99} & .17 & .15 & .15 & .17 & .16 & .14 & .17 & .18 & .15\\
\hline
Ising mixture, Component 2&
.19 & .14 & \textbf{.99} & \textbf{.99} & .12 & .17 & .15 & .17 & .15 & .12 & .12 & .12 & .12 & .12 & .12\\
\bottomrule 
\end{tabular}}}
\caption{
Estimated posterior mean of $\bm\gamma$ in the Ising model and of $\bm\gamma^{(1)},\bm\gamma^{(2)}$ in the Ising mixture model with two components under Design C and Design D.
}
\label{table:posteriorMean3}
\end{table}
}

\section{Real data applications} \label{subsec:applicationlogLinear}

We examine the fit of the Ising model and of the Ising mixture model with two components for two eight-way binary contingency tables. The first example focuses on the Rochdale data \--- a dataset that has been analyzed numerous times in the existent literature. The pairwise interactions of the Rochdale data are considered to be well understood. The second dataset comes from a larger dataset, and has not been previously analyzed in this form.  We chose it  because of its much larger sample size leads to significantly larger counts in some of the cells compared to the largest counts in the Rochdale data. The presence of the larger counts will test the ability of the Ising mixture models to adequately capture the imbalance between the magnitude of the largest and the smallest counts in sparse contingency tables.

\subsection{The Rochdale data}
The Rochdale data \citep{whittaker1990graphical} was collected to determine the relationships among factors affecting women's economic activity. 
It includes eight binary variables: 1) wife's economic activity (no, yes), 2) age of wife $>38$ (no, yes), 3) husband's employment status (no, yes), 4) presence of children $\leq 4$ years old (no, yes), 5) wife's education level, high-school+ (no, yes), 6) husband's education level, high-school+ (no, yes), 7) Asian origin (no, yes), and 8) presence of other working household members (no, yes). With a sample size of 665, the resulting $2^{8}$ contingency table, shown in Table~\ref{tab:rochadale_data}, is sparse, with 165 cells having 0 counts, 217 cells having small positive counts less than 3, and several cells with counts larger than 30 or even 50.

\begin{table}[ht]
\centering
\begin{tabular}{rrrrrrrrrrrrrrrr}
\hline 5 & 0 & 2 & 1 & 5 & 1 & 0 & 0 & 4 & 1 & 0 & 0 & 6 & 0 & 2 & 0 \\
8 & 0 & 11 & 0 & 13 & 0 & 1 & 0 & 3 & 0 & 1 & 0 & 26 & 0 & 1 & 0 \\
5 & 0 & 2 & 0 & 0 & 0 & 0 & 0 & 0 & 0 & 0 & 0 & 0 & 0 & 1 & 0 \\
4 & 0 & 8 & 2 & 6 & 0 & 1 & 0 & 1 & 0 & 1 & 0 & 0 & 0 & 1 & 0 \\
17 & 10 & 1 & 1 & 16 & 7 & 0 & 0 & 0 & 2 & 0 & 0 & 10 & 6 & 0 & 0 \\
1 & 0 & 2 & 0 & 0 & 0 & 0 & 0 & 1 & 0 & 0 & 0 & 0 & 0 & 0 & 0 \\
4 & 7 & 3 & 1 & 1 & 1 & 2 & 0 & 1 & 0 & 0 & 0 & 1 & 0 & 0 & 0 \\
0 & 0 & 3 & 0 & 0 & 0 & 0 & 0 & 0 & 0 & 0 & 0 & 0 & 0 & 0 & 0 \\
18 & 3 & 2 & 0 & 23 & 4 & 0 & 0 & 22 & 2 & 0 & 0 & 57 & 3 & 0 & 0 \\
5 & 1 & 0 & 0 & 11 & 0 & 1 & 0 & 11 & 0 & 0 & 0 & 29 & 2 & 1 & 1 \\
3 & 0 & 0 & 0 & 4 & 0 & 0 & 0 & 1 & 0 & 0 & 0 & 0 & 0 & 0 & 0 \\
1 & 1 & 0 & 0 & 0 & 0 & 0 & 0 & 0 & 0 & 0 & 0 & 0 & 0 & 0 & 0 \\
41 & 25 & 0 & 1 & 37 & 26 & 0 & 0 & 15 & 10 & 0 & 0 & 43 & 22 & 0 & 0 \\
0 & 0 & 0 & 0 & 2 & 0 & 0 & 0 & 0 & 0 & 0 & 0 & 3 & 0 & 0 & 0 \\
2 & 4 & 0 & 0 & 2 & 1 & 0 & 0 & 0 & 1 & 0 & 0 & 2 & 1 & 0 & 0 \\
0 & 0 & 0 & 0 & 0 & 0 & 0 & 0 & 0 & 0 & 0 & 0 & 0 & 0 & 0 & 0 \\
\hline
\end{tabular}
\caption{Rochdale data from \citet{whittaker1990graphical}. The cells counts appear row by row in lexicographical order with the levels of variable $8$ varying fastest and the levels of variable $1$ varying slowest.}
\label{tab:rochadale_data}
\end{table}

The estimated posterior means of the association indicators $\bm\gamma$ are presented in Table~\ref{table:posteriorMeanRochdale} based on an Ising model with spike-and-slab prior with $\sigma_0=0.1$, $\sigma_1=1$ and $\beta=0.5$. In what follows we consider an interaction effect between variables $v^\prime$ and $v$ to be significant if $E(\gamma_{v^\prime v}\mid\bm n) > 0.5$. In Figure~\ref{fig:RochdaleIsing} we compare the set of non-zero interaction effects we identified with those of \citet{whittaker1990graphical}. We find that all the $14$ significant pairwise associations found by \citet{whittaker1990graphical} are also found by our Ising model. However, the Ising model determined  two additional interactions: one interaction between variables $5$ (wife's education level) and $7$ (Asian origin) with a posterior mean of the corresponding association indicator of $0.86$, and the interaction between variables $2$ (age of wife $>38$) and $7$ (Asian origin) with a posterior mean of $0.65$. The posterior means of these two extra interactions are much smaller then the posterior means of the $14$ interactions that were also determined by \citet{whittaker1990graphical}. These two extra associations seem to be reasonable, and can be attributed to the wave of Asian immigration, particularly of Asian women, in the last century \citep{kim1977asian,piper2004wife}.

We also applied our Bayesian Ising mixture model with two components to the Rochdale data \--- see Table~\ref{table:posteriorMeanRochdale} and Figure~\ref{fig:RochdaleIsingMixture}. 
We fitted the mixture model with invariant main effects across components (Assumption \ref{assumption:sameMainEffectsForGeneralK}). The number of components in the normal mixture sampling distribution is $J=5$. The estimated posterior mean of the weight of the first component is $E(w^{(1)}\mid\bm n)=0.14$. The 16 significant associations found by the Bayesian Ising model are also idenfied in both components of the Ising mixture model. However, each of the two components involve additional significant associations. We note that the estimated posterior means of all the association indicators are $1$. Goodness-of-fit tests for the maximum likelihood estimators show that both the Ising model \eqref{eq:ising} and the Ising mixture model \eqref{eq:isingmixture} fit the data well with p-values of $0.42$ and $1$, respectively. The likelihood ratio test shows that the two-component Ising mixture model fits the data significantly better than the Ising model with a p-value $<0.001$.

The left panel of Table~\ref{table:ComparingRochdale} shows the cells containing the 10 largest observed counts in the Rochdale data together with their expected cell counts in the Ising model and the Ising mixture model. We see that both models are able to capture the largest counts reasonably well.

{
\renewcommand{\tabcolsep}{1.5pt}
\renewcommand{\arraystretch}{1.0}
\begin{table}[ht]
\centering
{{
\begin{tabular}{C{.61in}
C{.4in}C{.6in}C{.6in}C{.4in}C{.61in}
C{.4in}C{.6in}C{.6in}}
\toprule
    Cell &  Observed & Ising models &  Ising mixtures & & Cell &  Observed & Ising models &  Ising mixtures \\
\cmidrule(r){1-4} \cmidrule(r){6-9}
10001100 &        57 &       56.78 &           58.63 & & 00000000 &      4419 &  4181.60 &         4320.54 \\
11001100 &        43 &       44.61 &           42.89 & & 00010000 &      2063 &   2087.60 &         2134.16 \\
11000000 &        41 &       36.40 &           36.99 & & 00110000 &      1189 &  1324.38 &         1175.31 \\
11000100 &        37 &       38.81 &           37.65 & & 11111111 &      1056 &    1035.05 &         1055.71 \\
10011100 &        29 &   33.29 &           29.02 & & 00111111 &       764 &      702.14 &          752.47 \\
00011100 &        26 &       20.37 &           21.84 & & 00110100 &       667 &    607.96 &          658.85 \\
11000101 &        26 &      23.69 &           23.33 & & 00110101 &       654 &    702.29 &          657.78 \\
11000001 &        25 &      28.13 &           29.30 & & 00110001 &       601 &   571.91 &          597.43 \\
10000100 &        23 &       22.70 &           23.25 & & 00111101 &       549 &   577.10 &          561.45 \\
11001101 &        22 &       22.85 &           21.43 & & 00010100 &       529 &   565.21 &          518.92 \\
\bottomrule
\end{tabular}
}}
\caption{
Expected cell counts for the top 10 largest counts cells for the Rochdale data (left panel) and the NLTCS data (right panel). Cells are identified through their sequence of level indicators with 0 as no and 1 as yes.
}
\label{table:ComparingRochdale}
\end{table}
}

{
\renewcommand{\tabcolsep}{1.5pt}
\renewcommand{\arraystretch}{1.0}
\begin{table}[ht]
\centering
{{
\begin{tabular}{
C{1.0in}
C{.3in}C{.3in}C{.3in}C{.3in}C{.3in}C{.3in}
C{.3in}
C{.3in}C{.3in}C{.3in}C{.3in}C{.3in}C{.3in}C{.3in}}
\toprule
Model&
$\gamma_{12}$ & 
$\gamma_{13}$ & 
$\gamma_{14}$ &
$\gamma_{15}$ &
$\gamma_{16}$ &
$\gamma_{17}$ & 
$\gamma_{18}$ & 
$\gamma_{23}$ & 
$\gamma_{24}$ &
$\gamma_{25}$ & 
$\gamma_{26}$ & 
$\gamma_{27}$ &
$\gamma_{28}$ & 
$\gamma_{34}$ \\
\hline\\
Ising model&
.23 & \textbf{1.0} & \textbf{1.0} & \textbf{.96} & .22 & \textbf{1.0} & .21 & .29 & \textbf{1.0} & \textbf{1.0} & .18 & \textbf{.65} & \textbf{1.0} & .25\\
\hline
Ising mixture, Component 1&
\textbf{1.0} & \textbf{1.0} & \textbf{1.0} & \textbf{1.0} & \textbf{1.0} & \textbf{1.0} & \textbf{1.0} & \textbf{1.0} & \textbf{1.0} & \textbf{1.0} & \textbf{1.0} & \textbf{1.0} & \textbf{1.0} & \textbf{1.0}\\
\hline
Ising mixture, Component 2&
.11 & \textbf{1.0} & \textbf{1.0} & \textbf{1.0} & \textbf{.92} & \textbf{1.0} & \textbf{1.0} & \textbf{1.0} & \textbf{1.0} & \textbf{1.0} & .29 & \textbf{1.0} & \textbf{1.0} & \textbf{1.0}\\ \bottomrule
\\
\\ \toprule
Model&
$\gamma_{35}$ & 
$\gamma_{36}$ & 
$\gamma_{37}$ &
$\gamma_{38}$ &
$\gamma_{45}$ &
$\gamma_{46}$ & 
$\gamma_{47}$ & 
$\gamma_{48}$ & 
$\gamma_{56}$ &
$\gamma_{57}$ & 
$\gamma_{58}$ & 
$\gamma_{67}$ &
$\gamma_{68}$ & 
$\gamma_{78}$ \\
\hline\\
Ising model &
\textbf{1.0} & \textbf{.95} & \textbf{.98} & .28 & .30 & .46 & \textbf{.99} & \textbf{.99} & \textbf{1.0} & \textbf{.86} & .37 & \textbf{1.0} & .44 & .37\\
\hline
Ising mixture, Component 1&
\textbf{1.0} & \textbf{1.0} & \textbf{1.0} & \textbf{1.0} & \textbf{1.0} & \textbf{1.0} & \textbf{1.0} & \textbf{1.0} & \textbf{1.0} & \textbf{1.0} & \textbf{1.0} & \textbf{1.0} & \textbf{1.0} & \textbf{1.0}\\
\hline
Ising mixture, Component 2&
\textbf{1.0} & \textbf{1.0} & \textbf{1.0} & \textbf{1.0} & .24 & \textbf{1.0} & .34 & \textbf{1.0} & \textbf{1.0} & .29 & \textbf{1.0} & \textbf{1.0} & \textbf{1.0} & .39\\
\bottomrule 
\end{tabular}}}
\caption{
Estimated posterior means of the association indicators in the Ising model and the Ising mixture model with two components for the Rochdale data.
}
\label{table:posteriorMeanRochdale}
\end{table}
}

\begin{figure}[ht]
\includegraphics[width=.5\textwidth]{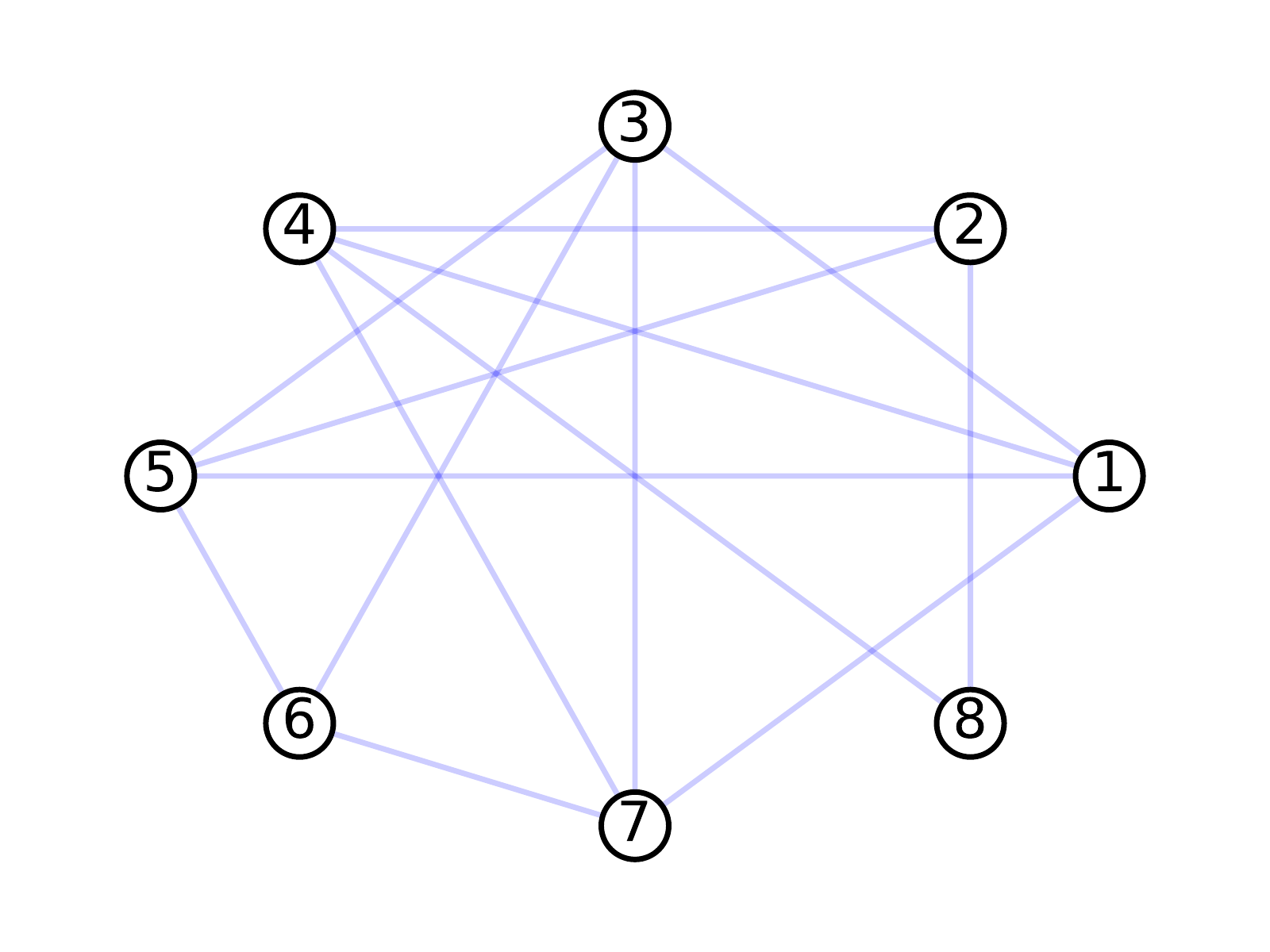}
\includegraphics[width=.5\textwidth]{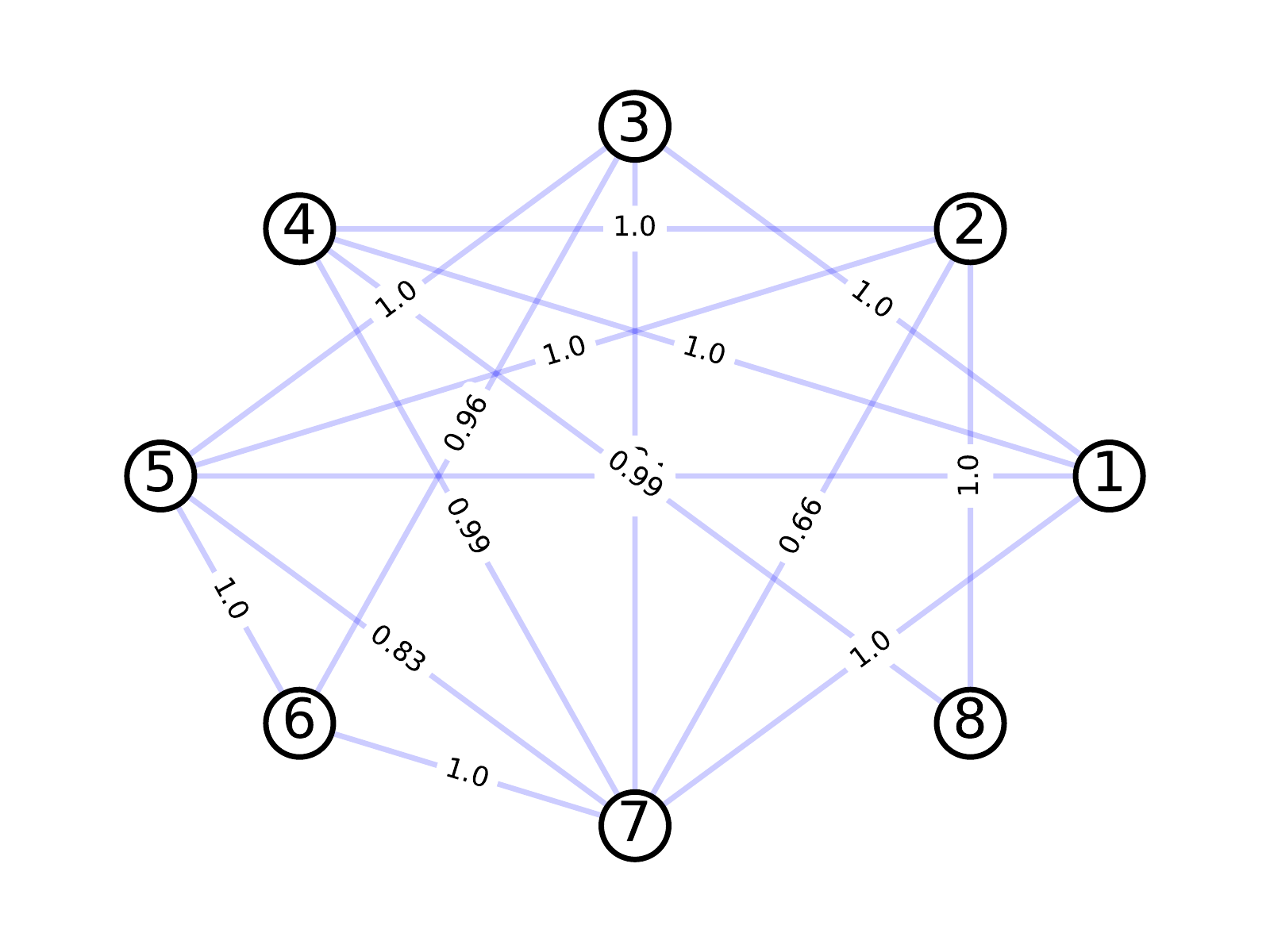}
\caption{Significant pairwise associations determined by \citet{whittaker1990graphical} (left panel) and the Bayesian Ising model we proposed (right panel). Each association is shown as an edge between vertices associated with variables it involves. The labels of the edges indicate the estimated posterior means of their indicators.
}
\label{fig:RochdaleIsing}
\includegraphics[width=.5\textwidth]{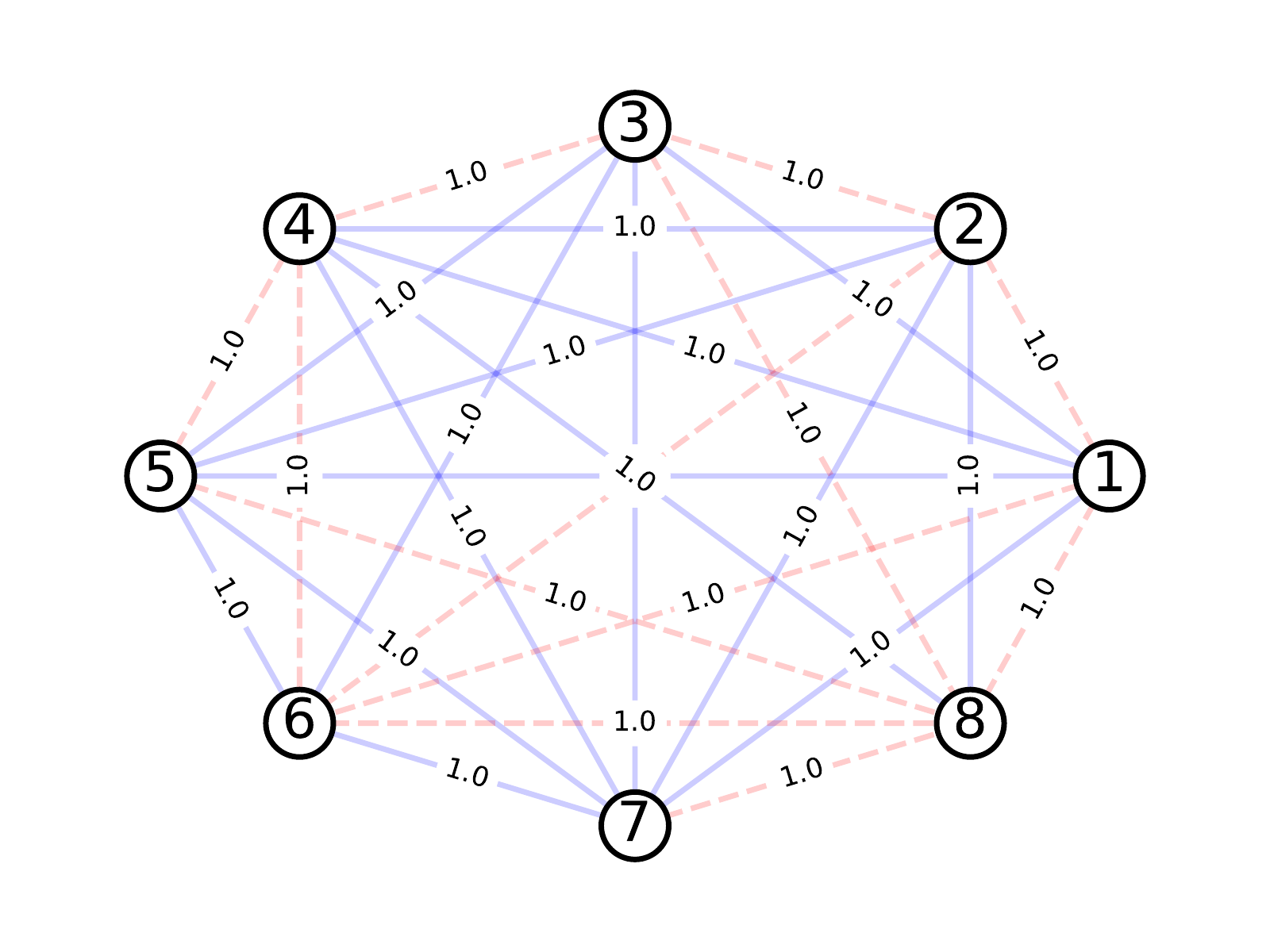}
\includegraphics[width=.5\textwidth]{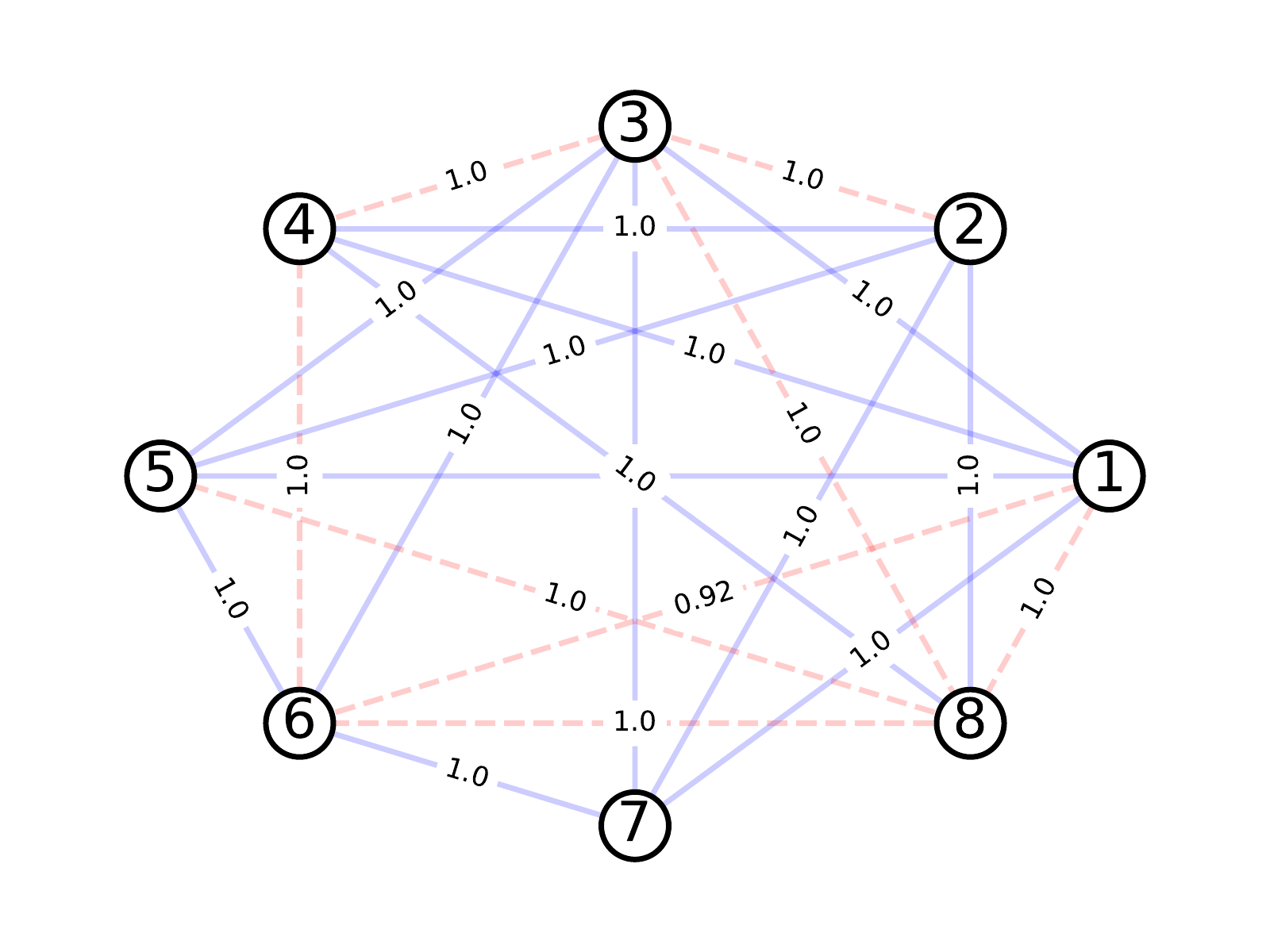}
\caption{
Significant pairwise associations determined by the Bayesian Ising mixture model. The 28 associations in the first component are presented in the left panel, while the 22 associations in the right component are shown in the right panel. Each association is shown as an edge between vertices associated with variables it involves. The labels of the edges indicate the estimated posterior means of their indicators.}
\label{fig:RochdaleIsingMixture}
\end{figure}

\subsection{The NLTCS data}
We analyze a dataset extracted from the National Long Term Care Survey (NLTCS) created by the Center of Demographic Studies at Duke University \citep{manton-et-1993}. 
It includes eight binary variables that measure functional disability in daily living activities: 1) eating, 2) getting around inside, 3) dressing, 4) cooking, 5) grocery shopping, 6) getting about outside, 7) traveling, and 8) managing money. Each measure classifies study participants as healthy or disabled. The data comprise observations of  elderly individuals aged 65 and above, pooled across four survey waves from 1982, 1984, 1989, and 1994. With a sample size of 21574, the resulting $2^{8}$ contingency table, Table~\ref{tab:NLTCSdataChapter4} is sparse, with $17.1\%$ cells having 0 counts, $46.9\%$ cells having small counts no larger than $5$, and largest $1.6\%$ of the cell counts accounting for $40.5\%$ of the observations.

\begin{table}[ht]
\centering
\begin{tabular}{rrrrrrrrrrrrrrrr}
\hline 
4419 & 97 & 67 & 472 & 2063 & 55 & 335 & 44 & 313 & 18 & 33 & 76 & 1 & 5 & 2 & 6 \\
119 & 115 & 1 & 16 & 0 & 4 & 1189 & 17 & 112 & 6 & 130 & 64 & 529 & 52 & 453 & 56 \\
2 & 22 & 13 & 116 & 10 & 67 & 47 & 0 & 2 & 0 & 1 & 92 & 0 & 4 & 0 & 4 \\
1 & 12 & 5 & 19 & 1 & 0 & 0 & 3 & 1 & 0 & 354 & 2 & 27 & 4 & 16 & 5 \\
55 & 3 & 24 & 1 & 0 & 0 & 1 & 7 & 1 & 60 & 667 & 29 & 601 & 14 & 1 & 16 \\
3 & 55 & 8 & 85 & 7 & 65 & 69 & 400 & 24 & 5 & 62 & 2 & 10 & 164 & 0 & 8 \\
2 & 6 & 3 & 15 & 3 & 5 & 0 & 0 & 0 & 0 & 0 & 0 & 3 & 32 & 4 & 41 \\
1 & 0 & 1 & 0 & 1 & 0 & 4 & 1 & 3 & 3 & 9 & 0 & 0 & 0 & 0 & 0 \\
14 & 226 & 11 & 140 & 5 & 0 & 2 & 2 & 10 & 0 & 7 & 3 & 3 & 11 & 31 & 3 \\
0 & 4 & 0 & 2 & 125 & 8 & 134 & 81 & 654 & 34 & 0 & 34 & 1 & 5 & 25 & 215 \\
8 & 80 & 30 & 5 & 105 & 19 & 50 & 1 & 1 & 0 & 2 & 3 & 0 & 3 & 0 & 3 \\
13 & 9 & 4 & 1 & 0 & 0 & 4 & 7 & 1 & 6 & 6 & 54 & 3 & 0 & 1 & 0 \\
0 & 1 & 6 & 0 & 6 & 1 & 42 & 3 & 28 & 48 & 207 & 12 & 0 & 5 & 0 & 2 \\
4 & 34 & 1 & 13 & 6 & 38 & 549 & 19 & 180 & 21 & 196 & 27 & 2 & 14 & 72 & 88 \\
8 & 3 & 0 & 0 & 2 & 8 & 11 & 3 & 15 & 9 & 5 & 19 & 3 & 26 & 0 & 28 \\
29 & 158 & 10 & 89 & 5 & 66 & 764 & 66 & 86 & 8 & 175 & 7 & 151 & 131 & 516 & 1056 \\
\hline
\end{tabular}
\caption{The NLTCS data. This 16 by 16 tables shows all the possible combination of the $8$ binary variables. The cells counts appear row by row in lexicographical order with Variable $8$ varying fastest and Variable $1$ varying slowest.}
\label{tab:NLTCSdataChapter4}
\end{table}

We employ our Bayesian framework to fit an Ising model and an Ising mixture model with two components. The prior specification, assumptions related to the invariance of main effects and the number of components in the mixture sampling distributions were the same as the ones used for the Rochdale data. The pairwise associations that have an estimated posterior mean of their indicators above $0.5$ are shown as graphs in the right panel of Figure~\ref{fig:NLTCSbeta50IsingEdges}. There are 21 significant interaction effects identified in the Ising model. The forward stepwise function from the R package gRim \citep{hojsgaard2012graphical} identifies 20 of these 21 pairwise interactions \--- see the left panel of Figure~\ref{fig:NLTCSbeta50IsingEdges}. The additional interaction identified by our Bayesian framework involves variables 1 and 4, and has the smallest estimated posterior mean of $0.85$ among the 21 associations. In the Ising mixture model, there are 17 significant interaction effects in the first component and 20 significant interaction effects in the second component \--- see Figure~\ref{fig:NLTCSbeta50IsingMixture}. The estimated posterior mean of the weight of the first component is $0.4$. The patterns of significant interaction effects in both components of the Ising mixture model are sparser than the pattern inferred in the Ising model. One example of a key difference between the inferred association patterns relates to variables $1$ and $4$. The estimated posterior mean of the association indicator is $0.85$ in the Ising model, while it is less than $0.5$ in the first component of the Ising mixture model and it is equal with $1$ in the second component. As such, this pairwise association is a combination of a weaker effect in one component and a very strong effect in the second component. Similar patterns with varying strength between components involve variables $1$ and $3$ and variables $4$ and $5$.

Goodness-of-fit tests for the maximum likelihood estimators show that the Ising model \eqref{eq:ising} does not fit the data well (p-value $<0.00001$), while the Ising mixture model \eqref{eq:isingmixture} with two components fits the data well (p-value $0.21$). The right panel of Table~\ref{table:ComparingRochdale} shows the cells containing the 10 largest observed counts observed in the NLTCS data together with their expected cell counts in the Ising model and the Ising mixture model. We see that the Ising mixture model seems to capture the size of the largest cell counts more faithfully than the Ising model.

{
\renewcommand{\tabcolsep}{1.5pt}
\renewcommand{\arraystretch}{1.0}
\begin{table}[ht]
\centering
{{
\begin{tabular}{
C{1.0in}
C{.3in}C{.3in}C{.3in}C{.3in}C{.3in}C{.3in}
C{.3in}
C{.3in}C{.3in}C{.3in}C{.3in}C{.3in}C{.3in}C{.3in}}
\toprule
Model&
$\gamma_{12}$ & 
$\gamma_{13}$ & 
$\gamma_{14}$ &
$\gamma_{15}$ &
$\gamma_{16}$ &
$\gamma_{17}$ & 
$\gamma_{18}$ & 
$\gamma_{23}$ & 
$\gamma_{24}$ &
$\gamma_{25}$ & 
$\gamma_{26}$ & 
$\gamma_{27}$ &
$\gamma_{28}$ & 
$\gamma_{34}$ \\
\hline\\
Ising model&
\textbf{1.0} & \textbf{.90} & \textbf{.85} & .32 & .12 & \textbf{1.0} & .11 & \textbf{1.0} & \textbf{1.0} & .29 & \textbf{.98} & \textbf{1.0} & .28 & \textbf{1.0}\\
\hline
Ising mixture, Component 1&
\textbf{1.0} & \textbf{1.0} & .16 & \textbf{.70} & .35 & .27 & .46 & \textbf{1.0} & \textbf{.97} & .33 & \textbf{.72} & .28 & \textbf{1.0} & \textbf{1.0}
\\
\hline
Ising mixture, Component 2&
\textbf{1.0} & .16 & \textbf{1.0} & .18 & .38 & \textbf{1.0} & .22 & \textbf{1.0} & \textbf{1.0} & .16 & {1.0} & \textbf{1.0} & .21 & \textbf{1.0}\\ \bottomrule
\\
\\ \toprule
Model&
$\gamma_{35}$ & 
$\gamma_{36}$ & 
$\gamma_{37}$ &
$\gamma_{38}$ &
$\gamma_{45}$ &
$\gamma_{46}$ & 
$\gamma_{47}$ & 
$\gamma_{48}$ & 
$\gamma_{56}$ &
$\gamma_{57}$ & 
$\gamma_{58}$ & 
$\gamma_{67}$ &
$\gamma_{68}$ & 
$\gamma_{78}$ \\
\hline\\
Ising model &
.46 & \textbf{1.0} & 0.1 & \textbf{1.0} & \textbf{.93} & \textbf{1.0} & \textbf{1.0} & \textbf{1.0} & \textbf{1.0} & \textbf{1.0} & \textbf{1.0} & \textbf{1.0} & \textbf{1.0} & \textbf{1.0}\\
\hline
Ising mixture, Component 1&
\textbf{1.0} & .26 & .18 & \textbf{1.0} & \textbf{1.0} & \textbf{1.0} & \textbf{1.0} & \textbf{1.0} & .27 & \textbf{1.0} & \textbf{1.0} & \textbf{1.0} & \textbf{1.0} & .16
\\
\hline
Ising mixture, Component 2&
.11 & \textbf{1.0} & \textbf{.99} & .15 & .30 & \textbf{1.0} & \textbf{.99} & \textbf{1.0} & \textbf{1.0} & \textbf{1.0} & \textbf{1.0} & \textbf{1.0} & \textbf{1.0} & \textbf{1.0}
\\
\bottomrule 
\end{tabular}}}
\caption{
The posterior means of $\bm\gamma^{(1)}$ and $\bm\gamma^{(2)}$ inferred by two-component Ising mixture models for the NLTCS data.
The number of component in the normal mixture sampling distribution is $J=5$.
The posterior mean of the weight of the first component, i.e. $E(w^{(1)}\mid\bm n)$, is $0.4$.
}
\label{table:posteriorMeanNLTCS}
\end{table}
}

\begin{figure}[ht]
\begin{center}
\includegraphics[width=0.49\textwidth]{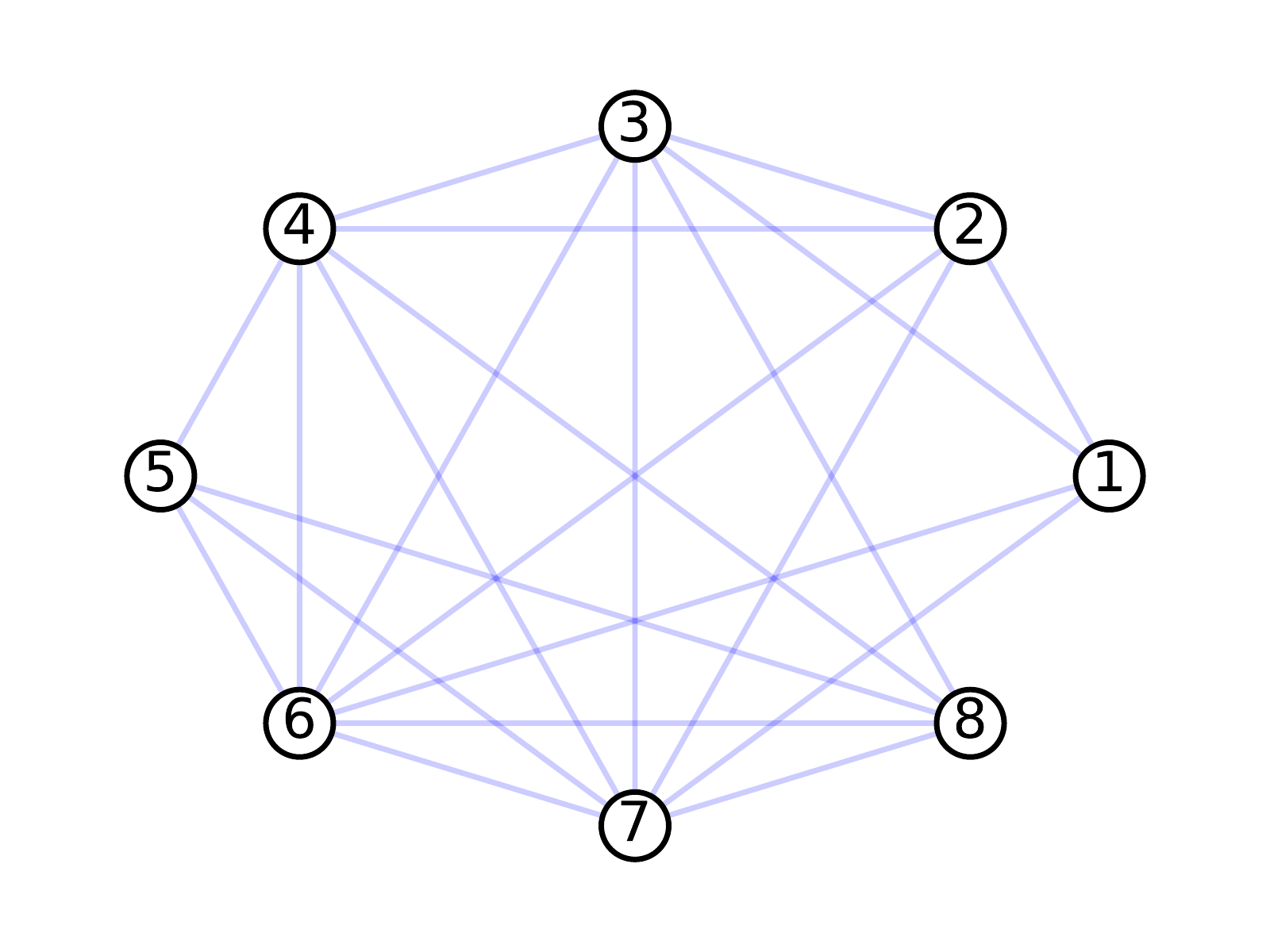}
\includegraphics[width=0.49\textwidth]
{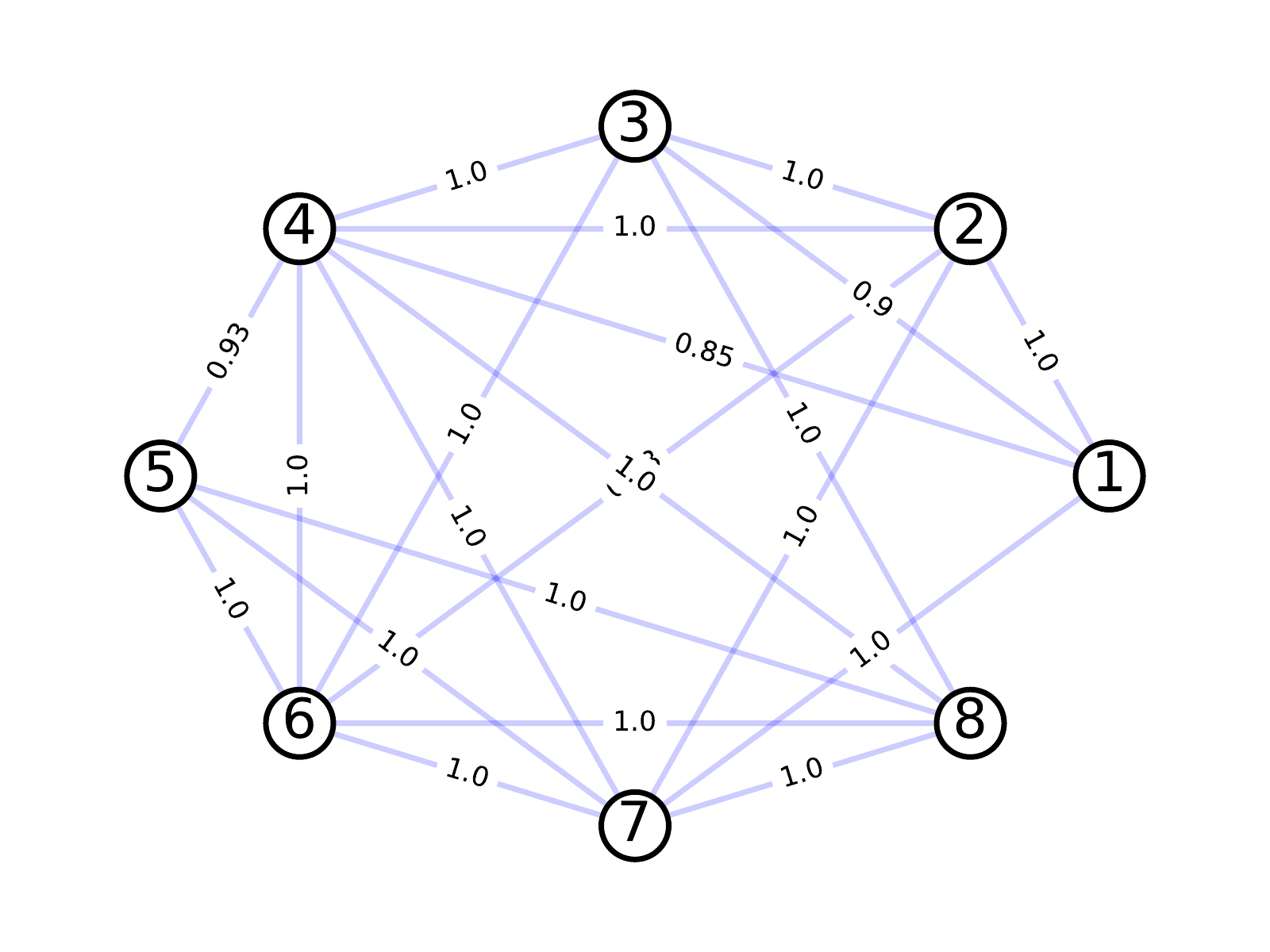}
\end{center}
\caption{
Significant pairwise associations determined by the forward stepwise function based on BIC in the R package gRim \citep{hojsgaard2012graphical} (left panel) and the Bayesian Ising model (right panel) in the NLTCS data. Pairwise associations are shown as edges between vertices associated with variables they involve. The labels of the edges indicate the estimated posterior means of their indicators.
}
\label{fig:NLTCSbeta50IsingEdges}
\includegraphics[width=.49\textwidth]{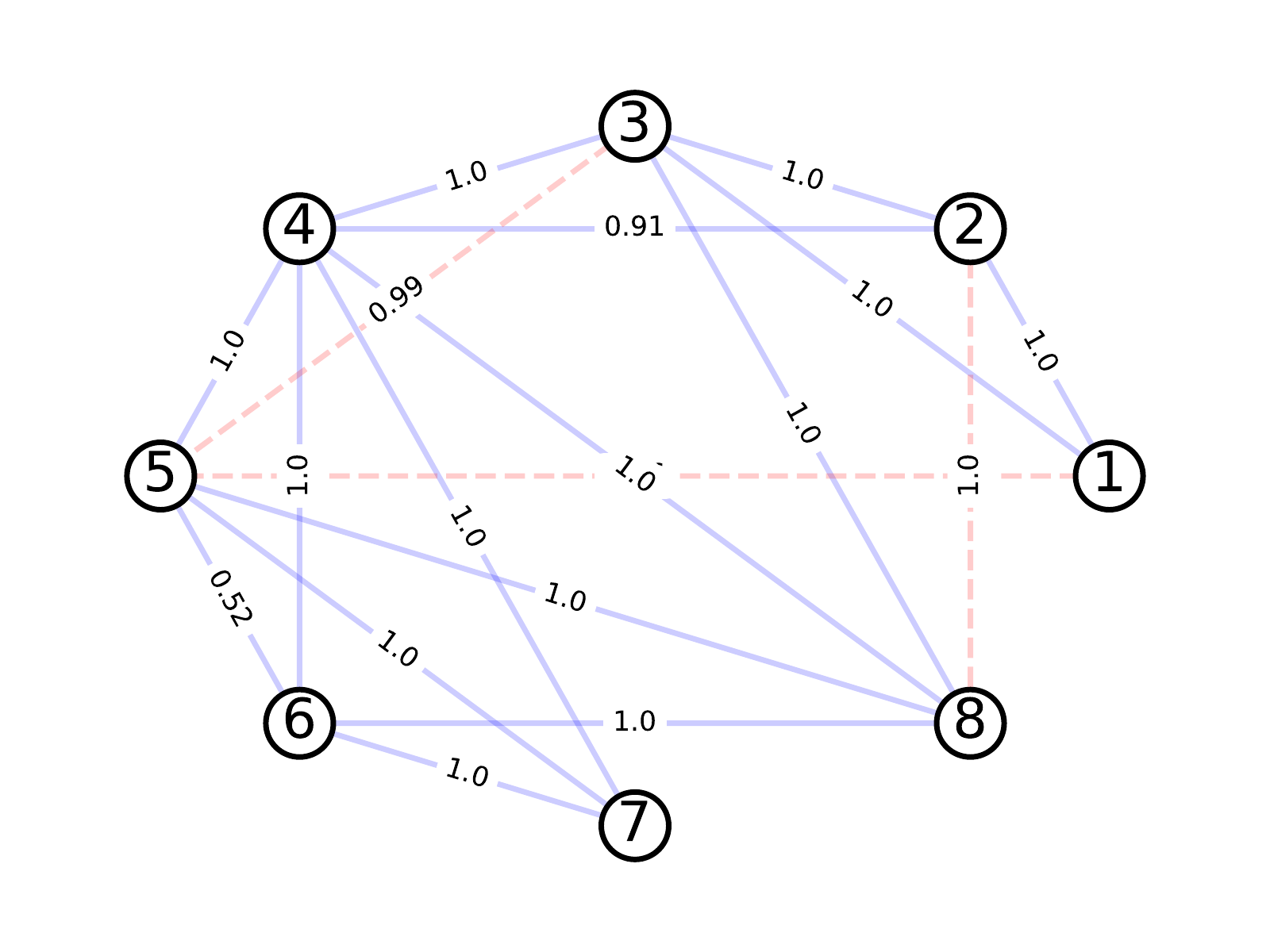}
\includegraphics[width=.49\textwidth]{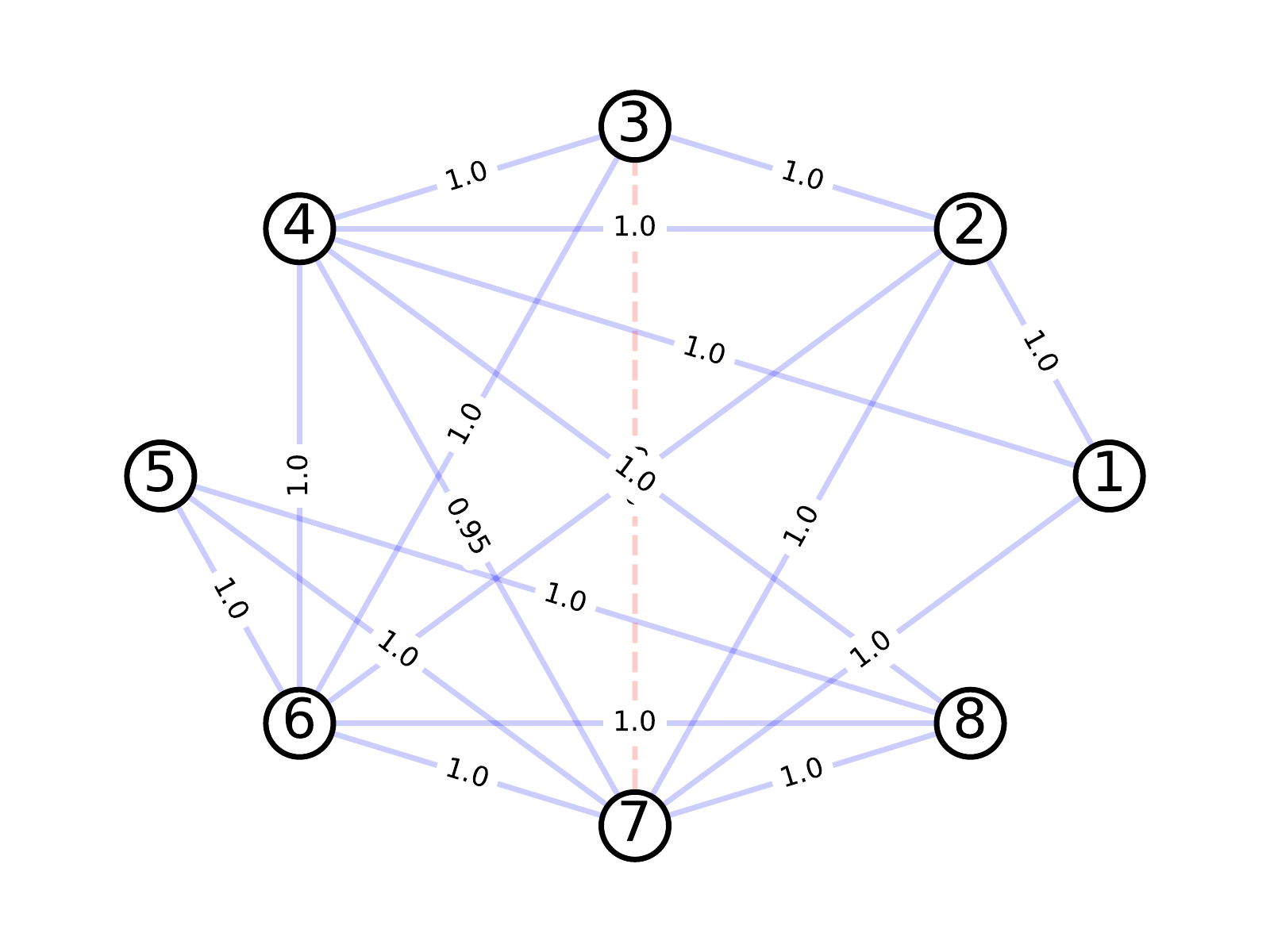}
\caption{Significant pairwise associations determined by the Bayesian Ising mixture model with two components in the NLTCS data. Each association is shown as an edge between vertices associated with variables it involves. There are 18 associations in the first component (shown in the left panel), and 19 associations in the second component (shown in the right panel). The labels of the edges indicate the estimated posterior means of their indicators.
}
\label{fig:NLTCSbeta50IsingMixture}
\end{figure}

\section{Identifiability of Ising mixture models}
\label{subsec:identifiability}

In this section we focus on exploring the identifiability of Ising mixture models from a theoretical perspective. The non-identifiability of the probability mass function given the association indicators, i.e., $\pi(\bm n\mid\bm\Gamma)$, arises from the non-identifiability of the probability mass function of Ising mixture models, i.e., $\pi(\bm n\mid\bm\Theta,\bm w)$. We start with a thorough review of closely related existing results and methods in the literature, and explain why their application to Ising mixture models is challenging. Then we propose some specific sufficient conditions and necessary conditions for the identifiability of Ising mixture models. We also discuss several examples that illustrate specific cases of key interest. Proofs of all the theoretical results are given in the Appendix.

\citet[Corollary 1]{manole2021estimating} provide a sufficient condition for the identifiability of finite mixtures of multinomial distributions. Ising mixture models assume that each cell count follows a multinomial distribution determined by cell probabilities, rather than a mixture of multinomial distributions \--- the situation studied in \citep{manole2021estimating}. Thus their results are not directly applicable in our setting. Other related results in literature are based on the conditional independence assumption, such as \citet{allman2009identifiability} and \citet{xu2017identifiability}. 
In their work, the conditional independence assumption allows for the transfer of the identifiability question to an equivalent one with fewer variables and more levels with the help of the row-wise tensor product. In the case of three categorical variables, the conditional independence assumption further enables the identifiability question to be transformed into an equivalent one by considering the rank of matrices using the triple product. However, these methods are not directly applicable for Ising mixture models since the joint probability conditional on a mixture component cannot be written as a product of marginal probabilities.

\subsection{Examples and main results related to identifiability}
\begin{definition}
An Ising mixture model parameterized by weights $\bm w$ as well as main and interaction terms $\bm\Theta$ is identifiable if and only if different $\bm w$, $\bm\Theta$ imply different cell probabilities $\bm p_{\rm mix}$.
\end{definition}

\begin{definition}[Definition 3 in \cite{rothenberg1971identification}]
A Ising mixture model is locally identifiable at a parameter point $\underline{\bm w},\underline{\bm\Theta}$ if and only if there exists an open neighborhood of $\underline{\bm w},\underline{\bm\Theta}$ containing no other parameter $\bm w,\bm\Theta$ implying the same cell probabilities $\bm p_{\rm mix}$ as $\underline{\bm w},\underline{\bm\Theta}$.
\end{definition}

In the sequel the identifiability is studied based on the following assumption.
\begin{assumption}
The main effects vectors $(\theta^{(k)}_{v}:v\in[d])^T$ are identical for all $k\in[K]$.
\label{assumption:sameMainEffectsForGeneralK}
\end{assumption}
This assumption arises from the similarity among subpopulations. Although it is assumed that each cell probability is a mixture of different components, we don't want to assume that these components are entirely different from each other.
By assuming identical main effects across all components, we can account for the similarity among components and also allow for heterogeneity of interaction effects in different components. Lemma~\ref{fact:withoutlossofgenerality} shows that this assumption simplifies the study of sufficient conditions and necessary conditions, allowing us to bypass main effects and focus on the identifiability of interaction effects.
\begin{lemma}
\label{fact:withoutlossofgenerality}
Under Assumption~\ref{assumption:sameMainEffectsForGeneralK}, 
the identifiability of an Ising mixture model
remains the same when
all main effects are assumed to be $0$.
\end{lemma}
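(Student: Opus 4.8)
The plan is to exploit the fact that, under Assumption~\ref{assumption:sameMainEffectsForGeneralK}, the shared main effects contribute a common multiplicative factor to every mixture component, which can be stripped off by an exponential tilting of the cell probabilities. Writing $m_{\bm i}:=\exp(\sum_{v}\theta_v i_v)$ for the (component-independent) main-effect factor and $q^{(k)}_{\bm i}:=\exp(\sum_{v^\prime<v}\theta^{(k)}_{v^\prime v}i_{v^\prime}i_v)$ for the interaction factor of component $k$, each component probability factors as $p_{\bm i}(\bm\theta^{(k)})=m_{\bm i}q^{(k)}_{\bm i}/Z_k$ with $Z_k=\sum_{\bm j\in I}m_{\bm j}q^{(k)}_{\bm j}$. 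Because the zero-main-effect model is literally the restriction of the general model to $\theta_v=0$, one direction (general identifiable $\Rightarrow$ zero-main identifiable) is immediate, since a restriction of an injective parameter-to-probability map is injective. The substantive content is the converse, and my plan is to produce an explicit bijection between the two parameterizations that intertwines their cell-probability maps.

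First I would show that the main effects are always recoverable from $\bm p_{\rm mix}$, so that two parameter settings producing identical cell probabilities must share the same $\theta_v$. The key observation is that for the configuration $\bm e_v\in I$ with $i_v=1$ and all other entries $0$, no interaction term is active, so $q^{(k)}_{\bm e_v}=1$ for every $k$, while $q^{(k)}_{\bm 0}=1$ as well. Hence $p_{{\rm mix},\bm 0}=\sum_k w^{(k)}/Z_k$ and $p_{{\rm mix},\bm e_v}=m_{\bm e_v}\sum_k w^{(k)}/Z_k$, giving the clean identity $\theta_v=\log\bigl(p_{{\rm mix},\bm e_v}/p_{{\rm mix},\bm 0}\bigr)$. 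Since Ising cell probabilities are strictly positive this ratio is always well defined, so the main-effect vector is a deterministic function of $\bm p_{\rm mix}$ and is therefore identifiable regardless of the interactions or weights.

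With a common main-effect vector fixed, I would then apply the tilting map $\bm p_{\rm mix}\mapsto\tilde{\bm p}$ defined by $\tilde p_{\bm i}=(p_{{\rm mix},\bm i}/m_{\bm i})/\sum_{\bm j}(p_{{\rm mix},\bm j}/m_{\bm j})$, which is a smooth bijection of the probability simplex, its inverse multiplying back by $m_{\bm i}$ and renormalizing. A direct computation shows that this tilting sends the general mixture to a zero-main-effect mixture with the same interaction parameters: $\tilde p_{\bm i}=\sum_k \tilde w^{(k)} q^{(k)}_{\bm i}/\tilde Z_k$, where $\tilde Z_k=\sum_{\bm j}q^{(k)}_{\bm j}$ and the reweighting $\tilde w^{(k)}\propto w^{(k)}\tilde Z_k/Z_k$ (normalized to sum to one). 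For fixed interactions the map $\bm w\mapsto\tilde{\bm w}$ is an invertible positive rescaling of the open simplex, hence a bijection, and it preserves the interaction coordinates. This yields a commuting square whose horizontal arrows (tilting on probabilities, reweighting on parameters) are bijections, so the vertical parameter-to-probability map of the general model is injective if and only if that of the zero-main model is.

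Assembling the converse direction, if two admissible parameter settings give the same $\bm p_{\rm mix}$, the main-effect step forces them to share the same $\theta_v$, the tilting step sends both to the same $\tilde{\bm p}$, zero-main identifiability forces equality of the interactions and of $\tilde{\bm w}$, and invertibility of the reweighting forces equality of $\bm w$; hence the general model is identifiable. The step I expect to be most delicate is verifying that the reweighting $\bm w\mapsto\tilde{\bm w}$ really is a bijection uniformly in the interaction parameters and that the diagram commutes exactly, in particular tracking the normalizing constants $Z_k$ and $\tilde Z_k$ so that the weight transformation is well defined and invertible (it depends only on the interactions, not on $\bm w$), which is precisely what lets injectivity pass back and forth between the two models.
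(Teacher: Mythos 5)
Your proof is correct and follows essentially the same route as the paper's: both first recover each main effect $\theta_v$ from the ratio $p_{{\rm mix},\bm e_v}/p_{{\rm mix},\bm 0}$ (possible because, under Assumption~\ref{assumption:sameMainEffectsForGeneralK}, the common factor $\exp(\theta_v)$ pulls out of the mixture sum), and then reduce the remaining cell-probability equations to the zero-main-effect case. Your explicit tilting and reweighting construction, with $\tilde w^{(k)}\propto w^{(k)}\tilde Z_k/Z_k$, simply makes precise the equivalence of likelihood equations that the paper's proof asserts without detail in its final sentence.
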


In what follows the parameter vector $\bm\theta$ includes only interaction effects for each component, i.e., $\bm\theta^{(k)}=(\theta_{v^\prime v}^{(k)}:v^\prime <v)^T$ for each $k\in[K]$.

The second assumption arises when interaction effects in only one component are unknown.
\begin{assumption}
All interaction effects in every component except the first component are known. In other words, $\bm\theta^{(k)}$ are fixed and known for all $k\geq2$.
\label{assumption:onlyOneUnkownComponents}
\end{assumption}

However, the following example shows that this assumption is not sufficient for the local identifiability of the Ising mixture model.
\begin{example}
\label{example:loglineard=2}
Suppose $d=2$ and $\theta_{12}^{(2)}=0$.
Then this mixture model is not locally identifiable for any $\theta_{12}^{(1)}\in\mathbb{R}$ and $w^{(1)}\in(0,1)$.\\
The proofs for this example and the following examples are deferred to the appendix.
\end{example}

Therefore, we need another assumption on the weights of components $\bm w$. 
\begin{assumption}
The weights of components $w^{(k)}\in(0,1),k\in[K]$ are fixed and known.
\label{assumption:fixedWeightsForGeneralK}
\end{assumption}

The following proposition states our first sufficient conditions for the local identifiability of Ising mixture models, which is particularly useful when some unknown subpopulation is mixed with other well-known populations.
\begin{proposition}
\label{example:identifiabilityd>=3}
Assumptions ~\ref{assumption:sameMainEffectsForGeneralK},~\ref{assumption:onlyOneUnkownComponents} and ~\ref{assumption:fixedWeightsForGeneralK} are sufficient for local identifiability of Ising mixture models.
\end{proposition}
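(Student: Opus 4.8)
The plan is to establish local identifiability through the rank of the Jacobian of the parameter-to-probability map, using the equivalence between local identifiability and nonsingularity of the information matrix from \citet{rothenberg1971identification}. First I would apply Lemma~\ref{fact:withoutlossofgenerality} to set all main effects to $0$, so that under Assumptions~\ref{assumption:sameMainEffectsForGeneralK}, \ref{assumption:onlyOneUnkownComponents} and \ref{assumption:fixedWeightsForGeneralK} the only free parameter is the vector of first-component interaction effects $\bm\theta^{(1)}\in\mathbb{R}^{d(d-1)/2}$. The task then becomes: show that the map $\bm\theta^{(1)}\mapsto\bm p_{\rm mix}$ is locally injective at every point.

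Because the weights and the components $k\ge 2$ are fixed and known, the mixture cell-probability map splits as $\bm p_{\rm mix}(\bm\theta^{(1)})=w^{(1)}\bm p(\bm\theta^{(1)})+\bm c$, where $\bm c:=\sum_{k\ge2}w^{(k)}\bm p(\bm\theta^{(k)})$ is a constant vector. Hence $\partial\bm p_{\rm mix}/\partial\bm\theta^{(1)}=w^{(1)}\,\partial\bm p(\bm\theta^{(1)})/\partial\bm\theta^{(1)}$, and since $w^{(1)}\in(0,1)$ the problem reduces to showing that the Jacobian of a \emph{single} Ising model, in its interaction parameters with zero main effects, has full column rank $d(d-1)/2$ at every $\bm\theta^{(1)}$.

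To show full column rank I would differentiate the exponential-family form \eqref{eq:ising}. Writing $A$ for the interaction design matrix and $\bm p=\bm p(\bm\theta^{(1)})$, one obtains $\partial\bm p/\partial\bm\theta^{(1)}=\mathrm{diag}(\bm p)\,(A^T-\bm1_{|I|}\mu^T)$ with $\mu=A\bm p$. Suppose this matrix annihilates some $\bm c$. Since $\bm p(\bm\theta^{(1)})$ has full support on $I=\{0,1\}^d$ for finite parameters, $\mathrm{diag}(\bm p)$ is positive definite, so $(A^T-\bm1_{|I|}\mu^T)\bm c=0$; reading off the $\bm i$-th coordinate, the function $\bm i\mapsto\sum_{v'<v}c_{v'v}\,i_{v'}i_v$ is constant on $I$. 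The degree-two monomials $\{i_{v'}i_v:v'<v\}$ together with the constant function $1$ are members of the multilinear monomial basis of functions on $\{0,1\}^d$, hence linearly independent, which forces $\bm c=0$. Equivalently, this says the covariance matrix $\cov_{\bm p}(i_{v'}i_v)$ of the sufficient statistics is nonsingular, i.e. the Fisher information is positive definite.

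The main obstacle — and the only place the specific structure of the Ising model is used — is this affine-independence step for the degree-two monomials on the hypercube; everything else is the generic exponential-family computation. A secondary technical point is invoking \citet{rothenberg1971identification} cleanly: rather than verifying the constant-rank regularity condition needed for the full information-matrix equivalence, I would argue directly that full column rank of the Jacobian makes the smooth map $\bm\theta^{(1)}\mapsto\bm p_{\rm mix}$ an immersion, which is locally injective by the inverse function theorem, yielding local identifiability in the sense of the definition above without any further regularity hypothesis.
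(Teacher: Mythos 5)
Your proposal is correct, and its first half coincides with the paper's argument: both proofs exploit Assumptions~\ref{assumption:onlyOneUnkownComponents} and~\ref{assumption:fixedWeightsForGeneralK} to write $\bm p_{\rm mix}=w^{(1)}\bm p(\bm\theta^{(1)})+\bm c$ with $\bm c$ known, cancel $\bm c$, and divide by $w^{(1)}>0$ to reduce the problem to a single Ising component. Where you diverge is the final step. The paper simply invokes the (global) identifiability of the Ising model to conclude $\bm\theta^{(1)}=\underline{\bm\theta}^{(1)}$ outright, which in fact delivers global, not merely local, identifiability of the first component. You instead prove the needed injectivity locally, by computing $\partial\bm p/\partial\bm\theta^{(1)}=\mathrm{diag}(\bm p)\bigl(A^T-\bm 1_{|I|}\mu^T\bigr)$ and showing full column rank from the affine independence of the degree-two monomials $i_{v'}i_v$ on $\{0,1\}^d$, then concluding via the immersion/inverse-function-theorem route rather than the constant-rank version of \citet{rothenberg1971identification}. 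This buys self-containedness: your rank argument is precisely a proof of the minimality of the interaction-only Ising exponential family, i.e.\ of the very fact the paper cites without proof (and it is the correct statement to verify after Lemma~\ref{fact:withoutlossofgenerality} has removed the main effects). The trade-off is that you obtain only local identifiability, which matches the proposition as stated but is weaker than what the paper's cancellation argument actually yields. Both arguments are valid; yours is longer but makes the dependence on the design matrix $A$ explicit, which is arguably more informative given that the surrounding examples (e.g.\ Example~\ref{example:FixedWeightNotEnough}) diagnose non-identifiability through exactly this kind of rank computation.
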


Next we connect Ising mixture models with mixtures of graphical structures. We represent an Ising model with parameters $\bm\theta$ by an undirected graph $G(\bm\theta):=G(V,\mathbb{E}(\bm\theta))$. In this representation, vertices $V=\{1,2,\ldots,d\}$ are associated with each variable, and edges $\mathbb{E}(\bm\theta):=\{(v^\prime,v):v^\prime<v,\theta_{v^\prime v}\neq0\}$ are associated with each non-zero pairwise interaction. Each missing edge corresponds with a pairwise interaction effect that is zero. The edges in $\mathbb{E}(\bm\theta)$ are called activation edges. We also define $\mathbb{V}(\bm\theta)$ as the set of vertices with degree at least 1. The vertices in $\mathbb{V}(\bm\theta)$ are called activation vertices or activation variables. We define the projection of the graph $G(\bm\theta)$ onto its activation variables $\mathbb{V}(\bm\theta)$ as the subgraph of $G(\bm\theta)$  determined by $\mathbb{V}(\bm\theta)$. This projection is denoted by $G(\bm\theta\mid\mathbb{V}(\bm\theta))$.

The graphical representation of Ising mixture models are constructed at the level of their mixture components. For component $k\in[K]$ with parameters $\bm\theta^{(k)}$, we construct its undirected graph $G(\bm\theta^{(k)})$. The set of activation variables and activation edges for component $k$ are denoted by $\mathbb{V}(\bm\theta^{(k)})$ and $\mathbb{E}(\bm\theta^{(k)})$, respectively.
To illustrate these definitions, consider the following example.

\begin{example}
\label{example:fourVariablesExample1}
Suppose $d=4$, $K=2$, $\theta^{(1)}_{v^\prime v}=0$ for all $(v^\prime,v)\neq(1,2)$ and $\theta^{(2)}_{v^\prime v}=0$ for all $(v^\prime,v)\neq(3,4)$.
Then $G(\bm\theta^{(1)})=(\{1,2,3,4\},\{(1,2)\})$ and $G(\bm\theta^{(2)})=(\{1,2,3,4\},\{(3,4)\})$. The activation variables in component 1 are $\mathbb{V}(\bm\theta^{(1)})=\{1,2\}$. The activation variables in component 2 are $\mathbb{V}(\bm\theta^{(2)})=\{3,4\}$. The activation edges in component 1 are $\mathbb{E}(\bm\theta^{(1)})=\{(1,2)\}$. The activation edges in component 2 are $\mathbb{E}(\bm\theta^{(2)})=\{(3,4)\}$.
$G(\bm\theta^{(1)}\mid\mathbb{V}(\bm\theta^{(1)}))=(\{1,2\},\{(1,2)\})$
and $G(\bm\theta^{(2)}\mid\mathbb{V}(\bm\theta^{(2)}))=(\{3,4\},\{(3,4)\})$. Please see Figure \ref{fig:examle1FourVariables}.

\begin{figure}[ht]
    \centering
\includegraphics[width=0.45\textwidth]{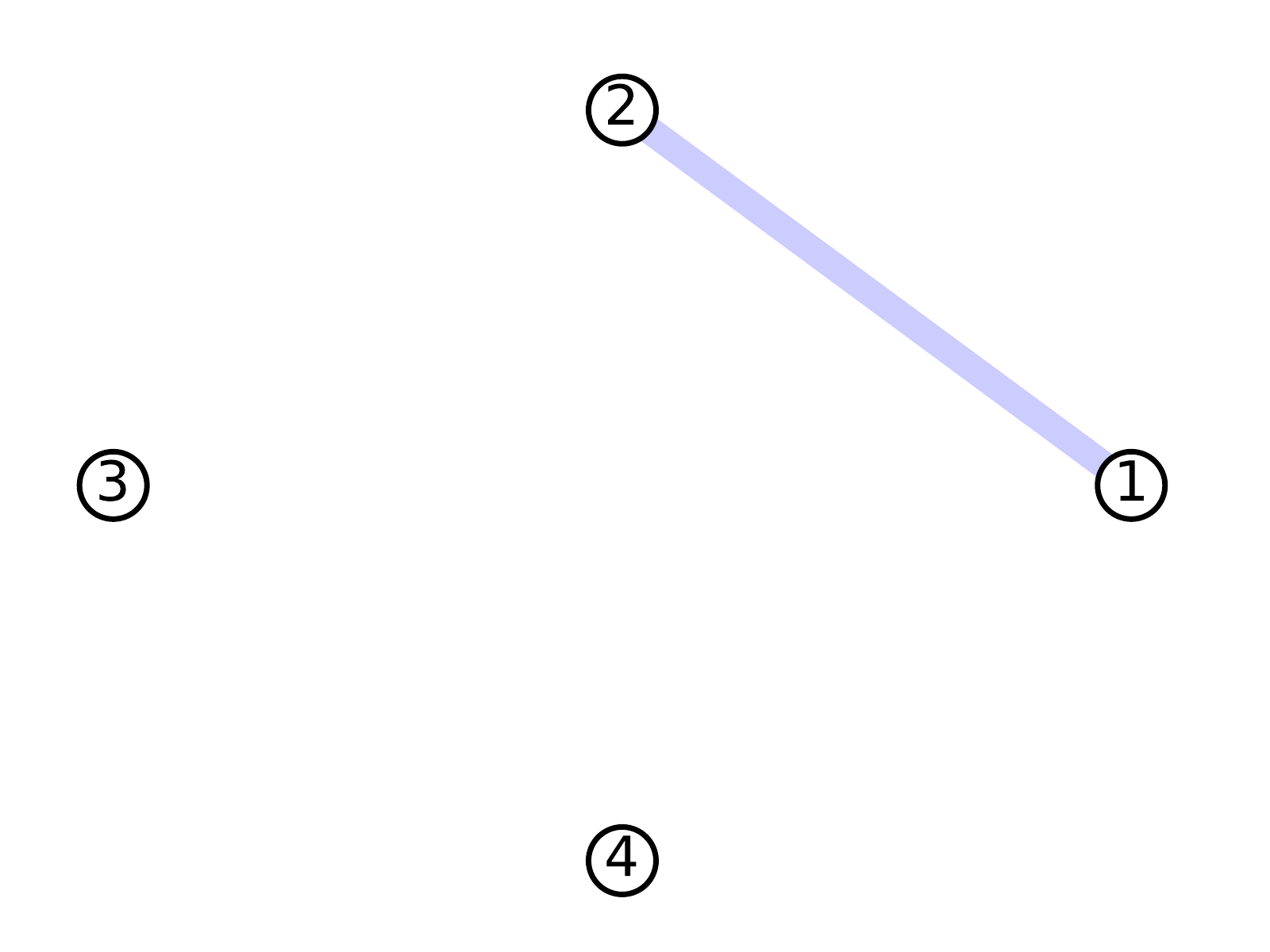}
\includegraphics[width=0.45\textwidth]{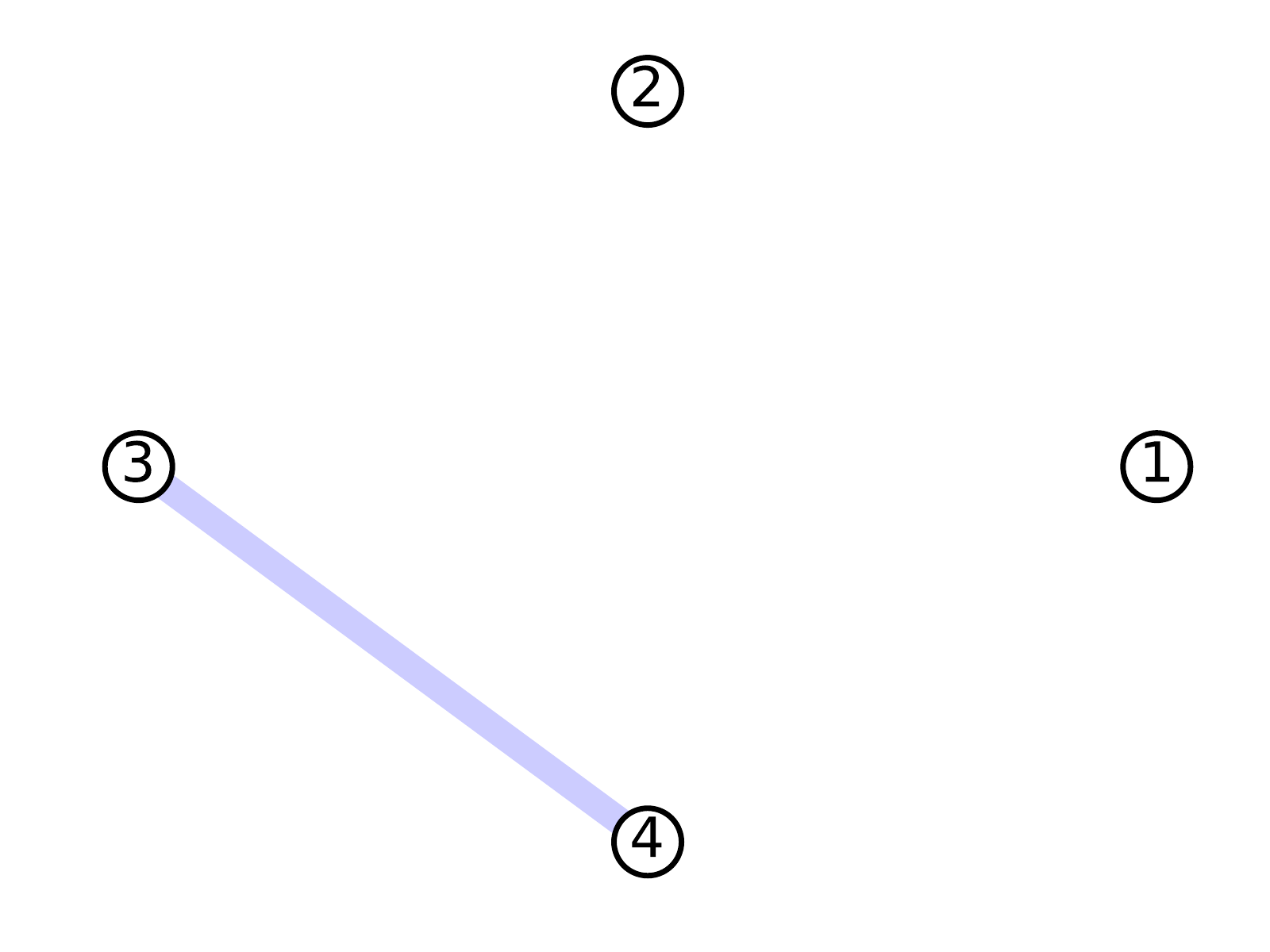}
    \caption{Graph representations for Example~\ref{example:fourVariablesExample1}. Component 1 is shown on the left and component 2 is shown on the right.
     }
\label{fig:examle1FourVariables}
\end{figure}
\end{example}

\noindent
We now are prepared to formulate an essential assumption for identifiability based on graphical representations and activation variables.
\begin{assumption}
The activation variables in different components of an Ising mixture model are mutually exclusive, i.e.,
$\bigcap\limits_{k\in[K]} \mathbb{V}(\bm\theta^{(k)})=\emptyset$.
\label{assumption:disjointVariablesForGeneralK}
\end{assumption}

The next result provides sufficient conditions for identifiability. This result is particularly useful when different components have different activation variables.

\begin{proposition}
\label{prop:disjointIdentifiability}
Assumptions \ref{assumption:sameMainEffectsForGeneralK}, \ref{assumption:fixedWeightsForGeneralK} and \ref{assumption:disjointVariablesForGeneralK} are jointly sufficient for local identifiability of Ising mixture model.
\end{proposition}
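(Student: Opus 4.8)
The plan is to establish local identifiability through the rank of the Jacobian of the parametrization map, using the equivalence between local identifiability and nonsingularity of the information matrix \citep{rothenberg1971identification}. First I would clear away the nuisance parameters: by Lemma~\ref{fact:withoutlossofgenerality} together with Assumption~\ref{assumption:sameMainEffectsForGeneralK} I may take all main effects to be $0$, and by Assumption~\ref{assumption:fixedWeightsForGeneralK} the weights $\bm w$ are fixed. The only free parameter is then $\bm\Theta=(\bm\theta^{(k)})_{k\in[K]}$, the stacked vectors of interaction effects, and it suffices to show that $M:\bm\Theta\mapsto\bm p_{\rm mix}(\bm w,\bm\Theta)$ is an immersion at the point of interest, i.e. that its Jacobian has full column rank there.

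Differentiating the mixture, the tangent vector associated with $\theta^{(k)}_{ab}$ has $\bm i$-th entry $w^{(k)}p_{\bm i}(\bm\theta^{(k)})\,(i_ai_b-E_k[X_aX_b])$, where $E_k$ denotes expectation under component $k$. Hence the Jacobian has trivial kernel precisely when the only solution of $\sum_{k}w^{(k)}\,\bm p(\bm\theta^{(k)})\,(g_k-E_k[g_k])\equiv 0$, with $g_k(\bm i)=\sum_{a<b}c^{(k)}_{ab}i_ai_b$ a pure quadratic, is $g_k\equiv 0$ for every $k$. The within-component part is routine: a single Ising model has strictly positive cell probabilities, so the degree-two monomials $\{i_ai_b\}$ are linearly independent modulo constants and their covariance (the Fisher information of these statistics) is positive definite; thus, for a fixed $k$, $g_k-E_k[g_k]\equiv 0$ forces $g_k\equiv 0$. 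The substance of the argument is therefore the joint independence of the $K$ blocks.

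To separate the blocks I would expand every tangent vector in the multilinear monomial basis $\phi_T(\bm i)=\prod_{v\in T}i_v$. Because the main effects vanish, the variables outside $\mathbb{V}(\bm\theta^{(k)})$ are independent fair coins under component $k$, so $\bm p(\bm\theta^{(k)})$ is supported on monomials $\phi_S$ with $S\subseteq\mathbb{V}(\bm\theta^{(k)})$ and the tangent vector for $\theta^{(k)}_{ab}$ is supported on $\phi_T$ with $T\subseteq\mathbb{V}(\bm\theta^{(k)})\cup\{a,b\}$. Here Assumption~\ref{assumption:disjointVariablesForGeneralK} enters decisively: when the activation sets are mutually exclusive, each component owns a private block of variables, and a monomial $\phi_T$ with $|T\cap\mathbb{V}(\bm\theta^{(k)})|\ge 3$ can be produced only by component $k$, since any other component $l$ would require $T\subseteq\mathbb{V}(\bm\theta^{(l)})\cup\{a,b\}$ while $\mathbb{V}(\bm\theta^{(l)})$ is disjoint from $\mathbb{V}(\bm\theta^{(k)})$ and the pair $\{a,b\}$ can absorb at most two of the at least three private coordinates. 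I would use these private, high-degree monomials---together with the identifiability of the single Ising model restricted to $\mathbb{V}(\bm\theta^{(k)})$---to annihilate the within-block coefficients $c^{(k)}_{ab}$ with both endpoints in $\mathbb{V}(\bm\theta^{(k)})$, and then peel off the cross-block and remaining coefficients by a descending induction on the number of external coordinates $|T\setminus\mathbb{V}(\bm\theta^{(k)})|$ (equivalently, an induction on $K$).

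The main obstacle is precisely this cross-component coupling. A pair $(a,b)$ straddling two activation sets perturbs, within each component, monomials that are localized in either set, so the Jacobian is genuinely not block diagonal and the coefficients cannot be read off one monomial at a time; mutual exclusivity is exactly what supplies enough private monomials to disentangle the blocks. The delicate bookkeeping occurs for small activation sets and for pairs touching the \emph{dead} variables (those active in no component), where only low-degree monomials are available, and one must also invoke non-degeneracy---each component retaining a genuinely private, nonempty activation set---since the model is visibly not locally identifiable if two components carry no interactions and coincide. Carrying out the elimination carefully, and verifying that no nontrivial combination survives, is the technical heart of the proof.
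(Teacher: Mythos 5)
Your route is genuinely different from the paper's (a Jacobian rank computation in the monomial basis versus the paper's direct cancellation of cell--probability equations), and your within-component argument and the observation that a monomial $\phi_T$ with $|T\cap\mathbb{V}(\bm\theta^{(k)})|\ge 3$ is private to component $k$ are both correct. But the statement you set out to prove --- that the Jacobian of $\bm\Theta\mapsto\bm p_{\rm mix}(\bm w,\bm\Theta)$ with respect to \emph{all} $Kd(d-1)/2$ interaction coordinates has full column rank --- is false under Assumptions~\ref{assumption:sameMainEffectsForGeneralK}, \ref{assumption:fixedWeightsForGeneralK} and \ref{assumption:disjointVariablesForGeneralK}, so no amount of careful elimination will close the argument. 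Take $d=4$, $K=2$, $w^{(1)}=w^{(2)}=1/2$, zero main effects, $\theta^{(1)}_{12}=t\neq0$, $\theta^{(2)}_{34}=s\neq0$ and every other interaction zero, so $\mathbb{V}(\bm\theta^{(1)})=\{1,2\}$ and $\mathbb{V}(\bm\theta^{(2)})=\{3,4\}$ are disjoint. Writing $u=i_1i_2$ and $v=i_3i_4$, your own formula gives the four tangent vectors in the directions $\theta^{(1)}_{12},\theta^{(1)}_{34},\theta^{(2)}_{12},\theta^{(2)}_{34}$ as $w^{(1)}p_{\bm i}(\bm\theta^{(1)})(u-\mu_1)$, $w^{(1)}p_{\bm i}(\bm\theta^{(1)})(v-\tfrac14)$, $w^{(2)}p_{\bm i}(\bm\theta^{(2)})(u-\tfrac14)$, $w^{(2)}p_{\bm i}(\bm\theta^{(2)})(v-\mu_2)$, and since $p_{\bm i}(\bm\theta^{(k)})$ depends on $\bm i$ only through $u$ (resp.\ $v$), all four are functions of $(u,v)\in\{0,1\}^2$ alone. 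Each also sums to zero over the $16$ cells, i.e.\ satisfies the single linear constraint $\sum_{u,v}n_un_vT(u,v)=0$ with $n_0=3$, $n_1=1$. Four vectors in a three-dimensional subspace are dependent, so the Jacobian has a nontrivial kernel. This is not a removable first-order artifact either: the induced distribution of $(U,V)=(X_1X_2,X_3X_4)$ under each component is a product Bernoulli, so restricted to these four coordinates the model is a two-component product-Bernoulli mixture on two binary variables --- four parameters mapped onto three moments, with one-dimensional fibers through this very point. The point is genuinely not locally identifiable when the off-support interactions are free; this is exactly the paper's Example~\ref{example:FixedWeightNotEnough} surviving the limit $\theta^{(1)}_{34},\theta^{(2)}_{12}\to0$ in which Assumption~\ref{assumption:disjointVariablesForGeneralK} becomes satisfied.

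The proposition is therefore only tenable --- and the paper only proves it --- for the restricted parametrization in which the zero interaction effects of each component are held fixed and only the coordinates $\theta^{(k)}_{v^\prime v}$ with $(v^\prime,v)\in\mathbb{E}(\bm\theta^{(k)})$ vary. The cross-block and dead-variable pairs you plan to ``peel off by descending induction'' must be excluded from the parameter vector, not disentangled; once they are excluded your argument collapses to something elementary, since each block's tangent vectors are then functions of its own private variable set and functions of pairwise disjoint blocks summing to zero must each be constant, hence zero by positivity of $p_{\bm i}(\bm\theta^{(k)})$ and non-constancy of nontrivial quadratics. The paper's proof takes a different, non-infinitesimal route that builds this restriction in implicitly: it evaluates the mixture at the cell $\bm i^{(1)}$ with $\mathbb{V}(\bm i^{(1)})=\mathbb{V}(\bm i)\cap\mathbb{V}(\bm\theta^{(1)})$, so that every component other than the first contributes only its normalizing constant, subtracts the equation for the cell $\bm 0$ to cancel those constants, and concludes $p_{\bm i}(\bm\theta^{(1)})-p_{\bm 0}(\bm\theta^{(1)})=p_{\bm i}(\underline{\bm\theta}^{(1)})-p_{\bm 0}(\underline{\bm\theta}^{(1)})$ for all $\bm i$, whence equality of the component distributions.
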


The following example illustrates that Assumption~\ref{assumption:disjointVariablesForGeneralK} alone does not guarantee  local identifiability, and therefore Assumption~\ref{assumption:fixedWeightsForGeneralK} is required for the validity of Proposition~\ref{prop:disjointIdentifiability}.

\begin{example}[Assumption~\ref{assumption:fixedWeightsForGeneralK} violated]
\label{example:k=2d=4}
Suppose $K=2$, $d=4$.
Only $\theta_{12}^{(1)}$, $\theta_{34}^{(2)}$ and $w^{(1)}$ are the nonzero unknown parameters. All other interaction effects are fixed at zero. The resulting mixture model is not locally identifiable for any $\theta_{12}^{(1)},\theta_{34}^{(2)}\in\mathbb{R}$ and $w^{(1)}\in(0,1)$.
\end{example}

Based on this example, it can be immediately inferred that any two-component Ising mixture model with at least one non-zero interaction effect in each component is not locally identifiable in general.

The following example shows that Assumption~\ref{assumption:disjointVariablesForGeneralK} is required for the validity of Proposition~\ref{prop:disjointIdentifiability}.
\begin{example}
[Assumption \ref{assumption:disjointVariablesForGeneralK} violated]
\label{example:FixedWeightNotEnough}
Let $d=4$, $w^{(1)}=w^{(2)}=.5$ and $\theta_{13}^{(k)}=\theta_{14}^{(k)}=\theta_{23}^{(k)}=\theta_{24}^{(k)}=0$ for $k=1,2$. The unknown parameters are $\theta_{12}^{(k)},\theta_{34}^{(k)}$ for $k=1,2$ only. Then this mixture model is not locally identifiable.
\end{example}

\section{Discussion}
\label{sec:discussion}

In this paper we developed finite mixtures of Ising within a Bayesian framework as an effective alternative to infer associations between binary variables. By combining Ising models with multivariate Bernoulli mixture models, our contribution addresses the current gap in the literature between loglinear models and various types of mixture models for categorical data. There are several key reasons why addressing this gap was a worthwhile effort. First, Ising mixture models not only effectively fit sparse data, but also offer interpretable results. If data are generated from an Ising mixture model with sparse interaction effects, using Ising models can result in denser and less confident interaction effects. Although Ising mixture models have more parameters than Ising models, they often infer fewer but more significant non-zero interaction effects. This feature of Ising mixture models was illustrated in the simulations experiments. 

Second, Ising mixture models can be viewed as an extension of multivariate Bernoulli mixture models, breaking the conditional independence assumption by introducing interaction effects for each component. Inferring interaction effects from Ising mixture models can lead to the identification of mixtures of graphical loglinear models, providing insight into multivariate patterns of associations of subpopulations. As graphical models become increasingly popular in fields such as social networks, it is likely that Ising mixture models will also gain attention in these areas. Furthermore, the development of Ising mixture models can potentially contribute to the development of more general finite mixtures of graphical loglinear models. 

Ising mixture models are a powerful tool to handle multi-modal spike-and-slab posteriors in data. Research has shown that mixture models can be used to approximate these multi-modal posteriors in linear regressions \citep{rockova2018particle}. Reporting a single model is a misleading reflection of overall model uncertainty. We studied Ising mixture models with spike-and-slab prior distributions to infer associations between binary variables. We have shown that our framework is not only effective in fitting sparse contingency tables, but also leads to interpretable results. We established sufficient and necessary conditions for the identifiability of Ising mixture models without relying on the assumption of conditional independence. More work is certainly needed to propose general conditions for identifiability of Ising mixture models, that will lead to improving the interpretation of the inferred associations and reduce the risk of overfitting.

Our proposed framework can be extended in at least two key directions. It can be generalized to handle categorical random variables with more than two levels starting from the Potts model \citep{RevModPhys.54.235}, although this should be done with care given the increase in the number of parameters. Furthermore, the inclusion of higher-order interaction terms can also be beneficial to allow the study of more complex interaction patterns of associations.

\section{Appendix}
\label{sec:proofs}
\begin{proof}[Proof of Lemma~\ref{fact:withoutlossofgenerality}]
Suppose $\underline{\bm w},\underline{\bm\Theta},\underline{\bm\Gamma}$ are true values of parameters in the Ising mixture model.
It follows from the likelihood equation of $\bm X=\bm 0$ as well as $X_1=1,X_2=\ldots=X_d=0$ that 
\[
\sum_{k\in[K]}\frac{w_k}{Z_k}
=\sum_{k\in[K]}\frac{\underline{w}_k}{\underline{Z}_k}
\text{ as well as }
\sum_{k\in[K]}\frac{w_k\exp(\theta_1)}{Z_k}
=\sum_{k\in[K]}\frac{\underline{w}_k\exp(\underline{\theta}_1)}{\underline{Z}_k},
\]
where $Z_k,\underline{Z}_k,k\in[K]$, are normalization constants.
Dividing the second equation by the first equation, it follows that $\theta_1=\underline{\theta}_1$.
It then follows from analogous arguments that $\theta_k=\underline{\theta}_k$ for all $k\in[K]$.
This lemma then follows immediately from the claim that all likelihood equations with the true value of main effects are equivalent with likelihood equations with main effects $0$.
\end{proof}
\begin{proof}[Proof of Example~\ref{example:loglineard=2}]
Let $\underline{\theta}_{12}^{(1)}$ and $\underline{w}^{(1)}$ be the true value of parameters.
Then we only need to prove that the solutions to the following equations are not unique:
\[
\frac{w^{(1)}}{3+\eta_{12}^{(1)}}+\frac{1-w^{(1)}}{4}=p_{00}=\frac{\underline{w}^{(1)}}{3+\underline{\eta}_{12}^{(1)}}+\frac{1-\underline{w}^{(1)}}{4}
\text{ and }
\frac{w^{(1)}\eta_{12}^{(1)}}{3+\eta_{12}^{(1)}}+\frac{1-w^{(1)}}{4}=p_{11}=\frac{\underline{w}^{(1)}\underline{\eta}_{12}^{(1)}}{3+\underline{\eta}_{12}^{(1)}}+\frac{1-\underline{w}^{(1)}}{4},
\]
where $\eta_{12}^{(1)}=\exp(\theta_{12}^{(1)})$ and $\underline{\eta}_{12}^{(1)}=\exp(\underline{\theta}_{12}^{(1)})$.
It follows from the second equation that 
\[
\eta_{12}^{(1)}=\underline{\eta}_{12}^{(1)}+\frac{(1-\underline{\eta}_{12}^{(1)})(w^{(1)}-\underline{w}^{(1)})}{\frac{\underline{w}^{(1)}}{3+\underline{\eta}_{12}^{(1)}}+\frac{w^{(1)}-\underline{w}^{(1)}}{4}}.
\]
Replace $\eta_{12}^{(1)}$ with this identity in the first equation, it follows that
\begin{align*}
\frac{w^{(1)}}{3+\eta_{12}^{(1)}}+\frac{1-w^{(1)}}{4}
&=\frac{\underline{w}^{(1)}}{3+\underline{\eta}_{12}^{(1)}}+\frac{1-\underline{w}^{(1)}}{4}.
\end{align*}
Therefore, this model is not locally identifiable.
\end{proof}
\begin{proof}[Proof of Proposition~\ref{example:identifiabilityd>=3}]
Let $\underline{\bm w}$ denote the true value of the weights $\bm w$ and it is known by assumptions.
Analogously, let $\underline{\bm\theta}^{(k)}$ denote the true value of interaction effects $\bm\theta^{(k)}$ in the $k$-th component.
The interaction effect in the first component $\bm\theta^{(1)}$ is unknown and its true value is denoted by $\underline{\bm\theta}^{(1)}$.

It then follows from the identity of the cell probabilities, i.e.,
\[
\bm p_{\rm mix}(\bm\theta^{(1)},\underline{\bm w},\underline{\bm\theta}^{(2)},\ldots,\underline{\bm\theta}^{(K)})
=\bm p_{\rm mix}(\underline{\bm\theta}^{(1)},\underline{\bm w},\underline{\bm\theta}^{(2)},\ldots,\underline{\bm\theta}^{(K)})
\]
that
\[
\underline{w}^{(1)}\bm p(\bm\theta^{(1)})+\sum_{2\leq k\leq K}\underline{w}^{(k)}\bm p(\underline{\bm\theta}^{(k)})
=\underline{w}^{(1)}\bm p(\underline{\bm\theta}^{(1)})+\sum_{2\leq k\leq K}\underline{w}^{(k)}\bm p(\underline{\bm\theta}^{(k)}),
\]
and hence
\[
\bm p(\bm\theta^{(1)})
=\bm p(\underline{\bm\theta}^{(1)})
\]
given $\underline{w}^{(1)}>0$.
It then follows from the identifiability of Ising model that $\bm\theta^{(1)}=\underline{\bm\theta}^{(1)}$.
\end{proof}
\begin{proof}[Proof of Proposition~\ref{prop:disjointIdentifiability}]
Without loss of generality we assume all main effects are zero by Lemma~\ref{fact:withoutlossofgenerality}.
Let $\underline{\bm\theta}^{(k)}$ denote the true value of $\bm\theta^{(k)}$ for $k\in[K]$.
The key step is to show that 
\begin{align}
p_{\bm i}(\bm\theta^{(k)})-p_{\bm 0}(\bm\theta^{(k)})=p_{\bm i}(\underline{\bm\theta}^{(k)})-p_{\bm 0}(\underline{\bm\theta}^{(k)})
\label{eq:subtractionEquation1}
\end{align}
for any $\bm i \in I$ and $k\in[K]$.
If these equations hold, we then sum them up over all $\bm i\in I$ and we immediately have $1-2^dp_{\bm 0}(\bm\theta^{(k)})=1-2^dp_{\bm 0}(\underline{\bm\theta}^{(k)})$ or  $p_{\bm 0}(\bm\theta^{(k)})=p_{\bm 0}(\underline{\bm\theta}^{(k)})$.
As a result, we have $p_{\bm i}(\bm\theta^{(k)})=p_{\bm i}(\underline{\bm\theta}^{(k)})$ and hence $\bm\theta^{(k)}=\underline{\bm\theta}^{(k)}$ by the identifiability of Ising models.

It suffices to prove \eqref{eq:subtractionEquation1} for $k=1$. Arguments for remaining cases are analogous and hence omitted. For each $\bm i\in I$, define $\mathbb{V}(\bm i):=\{v:v\in[d],i_v=1\}$ as activated variables for cell $\bm i$.
Then
$G(\bm\theta\mid\mathbb{V}(\bm i))$ is the projection of graph $G(\bm\theta)$ on $\mathbb{V}(\bm i)$ and $G(\bm\theta\mid\mathbb{V}(\bm i)\bigcap\mathbb{V}(\bm \theta))$ is the projection on activated variables $\mathbb{V}(\bm i)$ and activation variables $\mathbb{V}(\bm \theta)$.
The probability of cell $\bm i$ can be written as 
\[
p_{\bm i}(\bm\theta)
=\exp(\sum_{(v^\prime,v)\in \mathcal{E}(G(\bm\theta\mid\mathbb{V}(\bm i)))}\theta_{v^\prime v})/Z
=\exp(\sum_{(v^\prime,v)\in \mathcal{E}(G(\bm\theta\mid\mathbb{V}(\bm i)\bigcap\mathbb{V}(\bm \theta)))}\theta_{v^\prime v})/Z,
\]
where $Z=Z(\bm\theta)$ is a normalization constant depending on $\bm\theta$.

As a consequence, for any fixed $\bm i=(i_1,\ldots,i_d)^T$, 
we have 
\[
p_{\bm i}(\bm\theta^{(1)})=\exp(\sum_{(v^\prime,v)\in \mathcal{E}(G(\bm\theta^{(1)}\mid\mathbb{V}(\bm i)\bigcap \mathbb{V}(\bm\theta^{(1)})))}\theta^{(1)}_{v^\prime v})/Z^{(1)},
\] 
where $G(\bm\theta^{(1)})$ is the undirected graph associated with an Ising model with parameters $\bm\theta^{(1)}$, $\mathbb{V}(\bm i)$ are activated variables for cell $i$, $\mathbb{V}(\bm\theta^{(1)})$ are activation variables for parameters $\bm\theta^{(1)}$, and $Z^{(1)}=Z(\bm\theta^{(1)})$ is the normalization constant.

\noindent
Let $\bm i^{(1)}$ be the cell such that $\mathbb{V}(\bm i^{(1)})=\mathbb{V}(\bm i)\bigcap \mathbb{V}(\bm\theta^{(1)})$ and we immediately have 
\[
G(\bm\theta^{(1)}\mid\mathbb{V}(\bm i^{(1)})\bigcap\mathbb{V}(\bm\theta^{(1)}))
=G(\bm\theta^{(1)}\mid\mathbb{V}(\bm i)\bigcap \mathbb{V}(\bm\theta^{(1)})\bigcap\mathbb{V}(\bm\theta^{(1)}))
=G(\bm\theta^{(1)}\mid\mathbb{V}(\bm i)\bigcap \mathbb{V}(\bm\theta^{(1)})).
\]
Therefore,
we have
\begin{align*}
p_{\bm i^{(1)}}(\bm\theta^{(1)})
&=\exp\Big(\sum_{(v^\prime,v)\in\mathcal{E}(G(\bm\theta^{(1)}\mid\mathbb{V}(\bm i^{(1)})\bigcap\mathbb{V}(\bm\theta^{(1)})))}\theta_{v^\prime v}^{(1)}\Big)/Z^{(1)}\\
&=\exp(\sum_{(v^\prime,v)\in \mathcal{E}(G(\bm\theta^{(1)}\mid\mathbb{V}(\bm i)\bigcap \mathbb{V}(\bm\theta^{(1)})))}\theta^{(1)}_{v^\prime v})/Z^{(1)}\\
&=p_{\bm i}(\bm\theta^{(1)}).
\end{align*}
Now let's consider $p_{\bm i^{(1)}}(\bm\theta^{(k)})$ for any $k\neq 1$. 
Note that $\mathbb{V}(\bm i^{(1)})=\mathbb{V}(\bm i)\bigcap \mathbb{V}(\bm\theta^{(1)})\subset\mathbb{V}(\bm\theta^{(1)})$ and $\mathbb{V}(\bm\theta^{(1)})\bigcap\mathbb{V}(\bm\theta^{(k)})=\emptyset$ by Assumption~\ref{assumption:disjointVariablesForGeneralK}.
We then have $\mathbb{V}(\bm i^{(1)})\bigcap \mathbb{V}(\bm\theta^{(k)})=\emptyset$.
Therefore, for $k\neq 1$ we have
\[
p_{\bm i^{(1)}}(\bm\theta^{(k)})=\exp\Big(\sum_{(v^\prime,v)\in\mathcal{E}(G(\bm\theta^{(k)}\mid\mathbb{V}(\bm i^{(1)})\bigcap \mathbb{V}(\bm\theta^{(k)})))}\theta_{v^\prime v}^{(k)}\Big)/Z^{(k)}=\exp(0)/Z^{(k)}=1/Z^{(k)}.
\]
Then the probability of $\bm X=\bm i^{(1)}$ is 
\[
\underline w^{(1)}p_{\bm i^{(1)}}(\bm\theta^{(1)})+\sum_{k\geq 2}\underline w^{(k)}p_{\bm i^{(1)}}(\bm\theta^{(k)})
=\underline w^{(1)}p_{\bm i}(\bm\theta^{(1)})+\sum_{k\geq 2}\underline w^{(k)}/Z^{(k)}.
\]
This is also true with replacing $\bm\theta^{(k)}$ by its true value $\underline{\bm\theta}^{(k)}$. 
Therefore, the cell probability equation in the case of $\bm X=\bm i^{(1)}$, i.e.,
\[
\underline w^{(1)}p_{\bm i^{(1)}}(\bm\theta^{(1)})+\sum_{k\geq 2}\underline w^{(k)}p_{\bm i^{(1)}}(\bm\theta^{(k)})
=\underline w^{(1)}p_{\bm i^{(1)}}(\underline{\bm\theta}^{(1)})+\sum_{k\geq 2}\underline w^{(k)}p_{\bm i^{(1)}}(\underline{\bm\theta}^{(k)})
\]
 can be written as 
\begin{align}
\underline w^{(1)}p_{\bm i}(\bm\theta^{(1)})+\sum_{k\geq 2}\underline w^{(k)}/Z^{(k)}
=\underline w^{(1)}p_{\bm i}(\underline{\bm\theta}^{(1)})+\sum_{k\geq 2}\underline w^{(k)}/\underline{Z}^{(k)},
\label{eq:cellEquationOfX=i^(1)}
\end{align}
where $\underline{Z}^{(k)}$ is the normalization constant corresponding with $\underline{\bm\theta}^{(k)}$.
Considering the cell probability equation in the case of $\bm X=\bm 0:=(0,\cdots,0)$, it follows from $p_{\bm 0}(\bm\theta^{(k)})=1/Z^{(k)}$ that 
\[
\underline w^{(1)}p_{\bm 0}(\bm\theta^{(1)})+\sum_{k\geq 2}\underline w^{(k)}/Z^{(k)}
=\underline w^{(1)}p_{\bm 0}(\underline{\bm\theta}^{(1)})+\sum_{k\geq 2}\underline w^{(k)}/\underline{Z}^{(k)}.
\]
Then the difference between \eqref{eq:cellEquationOfX=i^(1)} and the identity above implies 
\[
\underline w^{(1)}p_{\bm i}(\bm\theta^{(1)})-\underline w^{(1)}p_{\bm 0}(\bm\theta^{(1)})
=\underline w^{(1)}p_{\bm i}(\underline{\bm\theta}^{(1)})-\underline w^{(1)}p_{\bm 0}(\underline{\bm\theta}^{(1)}),
\]
which gives \eqref{eq:subtractionEquation1}.
\end{proof}
\begin{proof}[Proof of Example~\ref{example:k=2d=4}]
To simplify notations, let $\bm\eta^{(k)}:=\exp(\bm\theta^{(k)})$, $k=1,2$.
Suppose $\underline{\eta}_{12}^{(1)},\underline{\eta}_{34}^{(2)}$ and $\underline{w}^{(1)}$ are true parameters.
It then follows from the likelihood equations that 
\begin{align*}
\frac{w^{(1)}}{3+\eta_{12}^{(1)}}+\frac{1-w^{(1)}}{3+\eta_{34}^{(2)}}
&=\frac{\underline{w}^{(1)}}{3+\underline{\eta}_{12}^{(1)}}+\frac{1-\underline{w}^{(1)}}{3+\underline{\eta}_{34}^{(2)}},\\
\frac{w^{(1)}\eta_{12}^{(1)}}{3+\eta_{12}^{(1)}}+\frac{1-w^{(1)}}{3+\eta_{34}^{(2)}}
&=\frac{\underline{w}^{(1)}\underline{\eta}_{12}^{(1)}}{3+\underline{\eta}_{12}^{(1)}}+\frac{1-\underline{w}^{(1)}}{3+\underline{\eta}_{34}^{(2)}},\\
\frac{w^{(1)}}{3+\eta_{12}^{(1)}}+\frac{(1-w^{(1)})\eta_{34}^{(2)}}{3+\eta_{34}^{(2)}}
&=\frac{\underline{w}^{(1)}}{3+\underline{\eta}_{12}^{(1)}}+\frac{(1-\underline{w}^{(1)})\underline{\eta}_{34}^{(2)}}{3+\underline{\eta}_{34}^{(2)}}.
\end{align*}
From the first two equations it follows that 
\[
\eta_{12}^{(1)}
=\underline{\eta}_{12}^{(1)}
+\frac{(w^{(1)}-\underline{w}^{(1)})(1-\underline{\eta}_{12}^{(1)})(\underline{\eta}_{12}^{(1)}+3)}{w^{(1)}(\underline{\eta}_{12}^{(1)}+3)+\underline{w}^{(1)}(1-\underline{\eta}_{12}^{(1)})},
\]
where $\eta_{12}^{(1)}$ is represented as a function of $w^{(1)}$.
Analogously, it follows from the first and the third equations that
\[
\eta_{34}^{(2)}
=\underline{\eta}_{34}^{(2)}
-\frac{(w^{(1)}-\underline{w}^{(1)})(1-\underline{\eta}_{34}^{(2)})(\underline{\eta}_{34}^{(2)}+3)}{(1-w^{(1)})(\underline{\eta}_{34}^{(2)}+3)+(1-\underline{w}^{(1)})(1-\underline{\eta}_{34}^{(2)})},
\]
where $\eta_{34}^{(2)}$ is represented as a function of $w^{(1)}$.
After some algebra we can see that $\eta_{12}^{(1)}$ and $\eta_{34}^{(2)}$ satisfy the first equation without other constraints. In other words, we can construct different values of parameters based on $w^{(1)}\in(0,1)$ with the same cell probability.
\end{proof}
\begin{proof}[Proof of Example~\ref{example:FixedWeightNotEnough}]
In this example, the nonidentifiability is justified by the singularity of the information matrix based on the equivalence between the local identifiability and the rank of the information matrix, as established in \citet{rothenberg1971identification,catchpole1997detecting}.
Some examples of Ising mixture models with singular information matrices are in Table~\ref{tab:exampleOfnecessaryCondition1}.
\begin{table}[ht]
\centering
\begin{tabular}{|c|c|c|c|c|c|}
\hline
Index 	& $\theta^{(1)}_{12}$ 	& $\theta^{(1)}_{34}$  	& $\theta^{(2)}_{12}$ 	& $\theta^{(2)}_{34}$ &eigenvalues of its Fisher information matrix\\
1 		&0.5					&0.5					&0.1					&0.1	&0.118389, 0.116171, 0.00220142, 3.04905e-17\\
2 		&0.5					&0.5					&0.1					&1	&0.117381, 0.102194, 0.00242613, -9.52222e-17\\
3 		&3					&0.5					&0.1					&1	&0.104261, 0.084175, 0.0228727, 1.59007e-16\\ \hline
\end{tabular}
\caption{Examples of Ising mixture models with singular information matrices.}
\label{tab:exampleOfnecessaryCondition1}
\end{table}
\end{proof}

\bibliographystyle{apalike}
\bibliography{paper_v1}
\end{document}